\newclass{\paraNP}{paraNP}
\newtheorem{theorem}{Theorem}
\newtheorem{corollary}[theorem]{Corollary}
\newcommand{\defproblem}[3]{%
      \noindent
      \begin{center}
      \begin{tcolorbox}[title=#1,left=0mm,top=0mm,bottom=0mm,right=0mm,boxsep=1mm]
        \begin{tabular}{@{}ll}
            \textbf{Input:} & \parbox[t]{.74\columnwidth}{#2}\\
            \addlinespace
            \textbf{Question:} & \parbox[t]{.74\columnwidth}{#3}\\
        \end{tabular}
      \end{tcolorbox}
      \end{center}
      }
\newcommand{\prob}{{\sc HomNC}}
\newcommand{\probtwo}{{\sc HetNC}}
\newcommand{\hetprobbin}{{\sc HetNC-B}}
\newcommand{\hetprobun}{{\sc HetNC-U}}
\newcommand{\OO}{\mathcal{O}}
\newcommand{\mcap}{K}
\newcommand{\capa}{\kappa}
\newcommand{\like}{\lambda}
\newcommand{\caches}{C}
\newcommand{\users}{U}
\newcommand{\catalog}{S}
\newcommand{\size}{\sigma}
\newcommand{\score}{\ell}
\newcommand{\vc}{\mathtt{vc}}
\title{Parameterized Complexity of Caching in Networks\thanks{An extended abstract of this paper will appear in the proceedings of AAAI 2025.}}
\author[1]{Robert Ganian}
\author[2]{Fionn Mc Inerney}
\author[2]{Dimitra Tsigkari}
\affil[1]{Algorithms and Complexity Group, TU Wien, Vienna, Austria}
\affil[2]{Telef\'{o}nica Scientific Research, Barcelona, Spain}
\date{}
\begin{document}

\maketitle

\begin{abstract}
The fundamental caching problem in networks asks to find an allocation of contents to a network of caches with the aim of maximizing the cache hit rate.
Despite the problem's importance to a variety of research areas---including not only content delivery, but also edge intelligence and inference---and the extensive body of work on empirical aspects of caching, very little is known about the exact boundaries of tractability for the problem beyond its general \NP-hardness. 
We close this gap by performing a comprehensive complexity-theoretic analysis of the problem through the lens of the parameterized complexity paradigm, which is designed to provide more precise statements regarding algorithmic tractability than classical complexity. Our results include algorithmic lower and upper bounds which together establish the conditions under which the caching problem becomes tractable.
\end{abstract}

\section{Introduction} \label{sec:intro}
Caching is one of the most important tasks that need to be handled by modern-day content delivery networks~(CDNs), and its role is expected to grow even further in the future, e.g., in the context of wireless communications~\citep{liu2016caching,paschos2018role}
and artificial intelligence. 
For the latter, it is crucial to cache: trained models for inference requests~\citep{salem2023toward, zhu2024towards, yu2024accelerating}, information that can accelerate the training of large models~\citep{ lindgren2021efficient, zhang2024cached}, and trained model parts in distributed learning paradigms~\citep{thapa2022splitfed, tirana2024workflow}.
Typically, in a caching network, one needs to decide where to store the contents in order to maximize metrics related to, e.g., network performance or user experience, with cache hit rate being the most predominant one~\citep{paschos2020cache}.
While many recent works have investigated dynamic caching policies that either decide on caching before time-slotted requests or via the eviction policy~\citep{paschos2019learning,bhattacharjee2020fundamental,rohatgi2020near,paria2021texttt, mhaisen2022optimistic}, in this article we focus on proactive caching, which decides how to fill the caches based on anticipated requests. Proactive caching is of crucial importance in edge caching~\citep{bastug2014living, tadrous2015optimal} and is applied in contemporary caching architectures like Netflix's CDN~\citep{netflix_fill_patterns}.
In practice, content requests in video streaming services can be steered through trends or recommendation systems, leading to accurate predictions on the content popularity~\citep{RecImpact-IMC10,netflix_and_fill}. 
  
In the proactive setting, the caching problem in a network of caches~(with possibly overlapping coverage) can be naturally represented as a bipartite graph between the set of users and the network of caches (of limited capacities), a set of contents, and information on the users' anticipated requests; the task is to allocate the contents to caches to maximize the cache hit rate.
This problem was proven to be \NP-hard in the seminal work that formalized this setting~\citep{femto}.
The subsequent numerous variants of this base problem~\citep{poularakis2014approximation,dehghan2016complexity,krolikowski2018decomposition,ricardo2021caching,tsigkari2022approximation} were all shown to be \NP-hard by a reduction from it.

In the above works, the theoretical intractability was typically overcome via the use of heuristics or approximation algorithms. In fact, despite the extensive body of work on the caching problem, our understanding of its foundational complexity-theoretic aspects is still in its infancy.
Specifically, very little is known about the problem's ``boundaries of tractability'', i.e., the precise conditions under which it becomes tractable.
This gap is in stark contrast with recent substantial advances made in this direction for problems arising in artificial intelligence and machine learning such as 
Bayesian network learning~\citep{OrdyniakS13,GanianK21,gruttemeier2022learning},
 data completion~\citep{GanianKOS18,GanianKOS20,EibenGKOS21}, 
and resource allocation~\citep{BliemBN16,DeligkasEGHO21,EibenGHO23}.

The aim of this article is to close the aforementioned gap by carrying out a comprehensive study of the complexity-theoretic aspects of caching in its modern network-based formalization through the lens of \emph{parameterized complexity}~\citep{DowneyF13,paramcompbook}, which provides more refined tools to identify the conditions under which an \NP-hard problem becomes tractable when compared to classical complexity. In the parameterized setting, the running times of algorithms are studied with respect to the input size $n$, but also an integer parameter~$k$, which intuitively captures well-defined structural properties of the input. The most desirable outcome here is \emph{fixed-parameter tractability} (\FPT), which means that the problem can be solved in time $f(k)\cdot n^{\OO(1)}$ for a computable function~$f$.
A weaker notion of this is \XP-tractability, which means that the problem can be solved in time $n^{f(k)}$.
The least favorable outcome is $\paraNP$-hardness, which means that the problem remains \NP-hard even when $k$ is a constant. 
Lastly, excluding fixed-parameter tractability is done by proving the problem is \W[1]-\emph{hard} via a parameterized reduction.

\subparagraph{Our Contributions.} \label{subsec:contributions}
We consider three fundamental variants of the \textsc{Network-Caching} problem, which differ solely on the modeling and encoding of the content sizes.
The main challenges of our investigation arise from the various aspects of the problem: when solving an instance one needs to (1) adhere to (knapsack-like) capacity constraints while (2) taking into account the graph structure of the caching network, and also (3) deal with the large variety of naturally occurring parameterizations.
In particular, we study the complexity of these variants with respect to six parameters: 
the number~$\caches$ of caches, 
the maximum capacity~$\mcap$ of any cache, 
the number~$\catalog$ of contents, 
the number~$\users$ of users, 
the maximum degree~$\Delta$ of the network, and
the maximum number~$\like$ of contents that any user may request.\footnote{The problem definitions and motivation for the employed parameterizations are deferred to Section~\ref{sec:prelims}.} 

We first consider the most computationally challenging variant of the problem, where the content and cache sizes are encoded in their ``exact form'' (as binary-encoded numbers).
Interestingly, the inherent hardness of this variant---denoted \hetprobbin---makes it easier to perform a detailed analysis compared to the two other variants.
Note that the parameter $\mcap$ is incompatible with this model (bounding the cache capacities translates the problem to the unary-encoded setting investigated later).
We begin by establishing the problem is \FPT\ under the combined parameterizations\footnote{Parameterizing by a set of parameters is equivalent to parameterizing by their sum~\citep{paramcompbook}.} of $\caches+\catalog$ (Theorem~\ref{thm:C+S}) and $\users+\catalog$ (or the complexity-theoretically equivalent $\users+\lambda$) (Corollary~\ref{cor:U+S}); as \hetprobbin\ is the most general of the three models, these tractability results carry over to the other two. Then, we show that \hetprobbin\ is \paraNP-hard under all remaining parameterizations (Theorems~\ref{thm:hard-NAE-sat}~and~\ref{thm:knapsack} and Corollary~\ref{cor:split-users-knapsack}).
The clean cut between tractability and intractability for \hetprobbin\ is depicted in \cref{fig:hetero_binary_landscape}~(top).

\begin{figure}[!t]
\centering  
\includegraphics[width=\columnwidth]{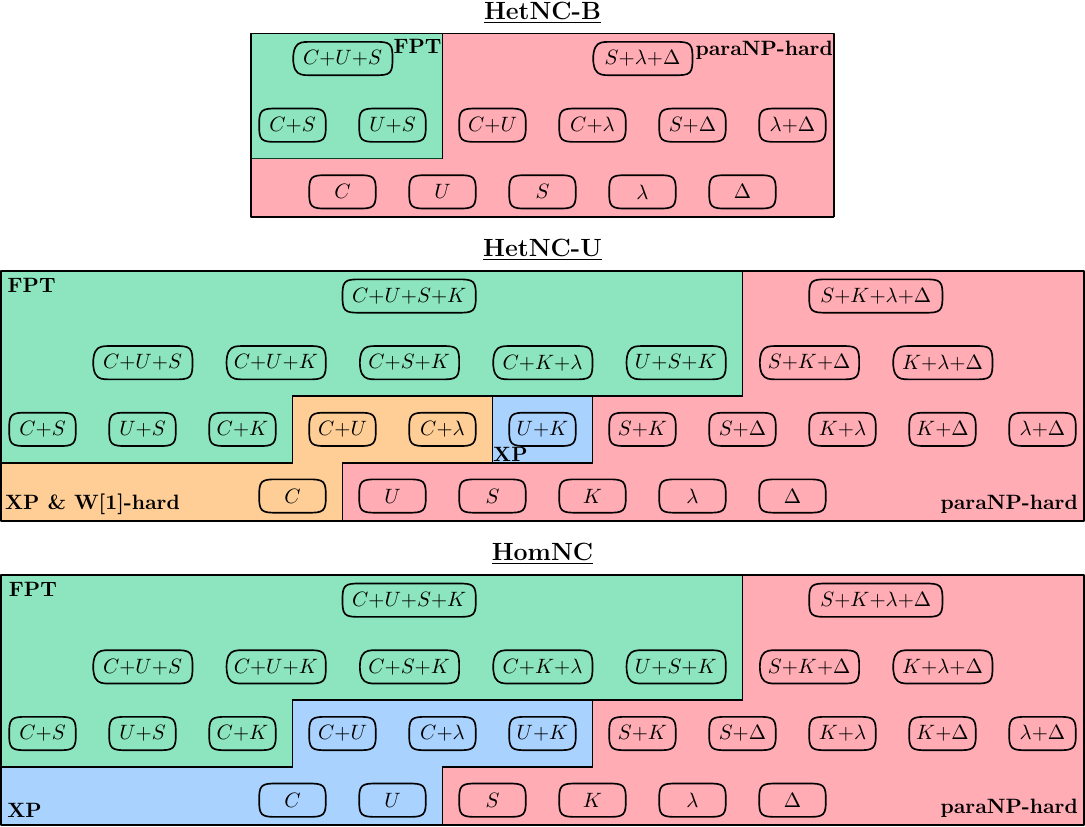}
\caption{Complexity landscapes of \hetprobbin\ (top), \hetprobun\ (middle), and \prob\ (bottom). We (mostly) omit parameterizations including $\caches+\Delta$, $\users+\Delta$, $\users+\lambda$, or $\catalog+\lambda$, as the first two, the third, and the last are complexity-theoretically equivalent to $\caches+\users$, $\users+\catalog$, and $\catalog$, respectively. Indeed, for any pair of combined parameters claimed to be equivalent, each of them can be bounded by a computable function of the other. To prove the first two equivalences, note that $\Delta\leq \caches+\users$, $\users\leq \caches\cdot \Delta$, and $\caches \leq \users\cdot \Delta$. For the latter two, note that $\catalog\leq \users \cdot \like$ and $\lambda\leq \catalog$.}
\label{fig:hetero_binary_landscape}
\end{figure}

Next, we consider the unary variant of the problem, where one still distinguishes between contents of different sizes, but only on a more categorical scale (e.g., by distinguishing between full movies, longer shows, and shorter shows in the setting of video streaming).
Here, the sizes of the contents are encoded in unary, which affects the complexity of the associated \hetprobun\ problem.
On the positive side, in addition to $\caches+\catalog$ and $\users+\catalog$, we identify a third fixed-parameter tractable fragment of the problem, specifically $\caches+\mcap$ (Theorem~\ref{thm:xp-c}).
Moreover, we obtain \XP-tractability for \hetprobun\ with respect to $\caches$ alone (Theorem~\ref{thm:xp-c}), and complement this result by establishing \W\textup{[1]}-hardness (and hence, ruling out fixed-parameter tractability) even when combining $\caches$ with either $\users$ or $\like$ (Theorem~\ref{thm:one-user} and Corollary~\ref{cor:split-users}).
All fragments not covered by these cases are then \paraNP-hard (Theorems~\ref{thm:hard-NAE-sat}~and~\ref{thm:one-user}), except for $\users+\mcap$ for which we establish \XP-tractability (Theorem~\ref{thm:xp-ud}).
The obtained complexity landscape of \hetprobun\ is illustrated in \cref{fig:hetero_binary_landscape}~(middle). 

Finally, we turn to the simplest, totally uniform case where each content has the same size; we denote the associated problem \prob.
We first demonstrate the challenging nature of even this variant by strengthening the previously known lower bound~\citep{femto} and establishing \NP-hardness when $\catalog=\like=2$, $\mcap=1$, and, additionally, $\Delta=3$ (Theorem~\ref{thm:hard-NAE-sat}).
Furthermore, unlike for \hetprobun, here we establish \XP-tractability with respect to $\users$ alone (Corollary~\ref{cor:xp-u}).
For the lower bounds, all \paraNP-hardness results, other than with respect to $\users$ alone, can be shown to carry over to the simpler setting of \prob.
Our results for \prob\ are displayed in \cref{fig:hetero_binary_landscape}~(bottom).

To complete the complexity landscapes for \hetprobun\ and \prob, it remains to complement some of the obtained \XP\ algorithms (those depicted in blue in \cref{fig:hetero_binary_landscape}) with \W\textup{[1]}-hardness results to exclude fixed-parameter tractability.
In this respect, we show that resolving any of the open cases for \prob\ by establishing \W[1]-hardness (which we conjecture to hold)  immediately implies the \W\textup{[1]}-hardness of all the open cases depicted in \cref{fig:hetero_binary_landscape}.
In fact, we show that all~$5$ ``open'' parameterizations for \prob\ are complexity-theoretically equivalent to each other (Theorem~\ref{thm:fpt-equivalent}).
Hence, while settling these $5$ open questions seems to require novel ideas and techniques beyond today's state of the art, resolving any of them settles the complexity of the others.

As our final contribution, we complement the complexity-theoretic landscapes depicted in \cref{fig:hetero_binary_landscape} by considering whether tractability can be achieved by exploiting finer structural properties of the network. Indeed, for graph-theoretic problems, it is typical to also consider structural parameterizations of the input graph such as their \emph{treewidth}~\citep{RobertsonS86}, \emph{treedepth}~\citep{sparsity}, \emph{feedback edge number}~\citep{GanianK21,BredereckHKN22}, or \emph{vertex cover number}~\citep{BodlaenderGP23,Chalopin24}.
It is also common to study such problems when the graph is restricted to being \emph{planar}, i.e., it can be drawn on a plane such that none of its edges cross.
This is particularly important for {\sc Network-Caching}, as planar networks will have much less overlapping cache coverage. 
As our last set of results, we classify the complexity of \hetprobbin, \hetprobun, and \prob\ with respect to the above structural restrictions; in particular, we show that almost none of these restrictions help achieve tractability, even when combined with those considered in \cref{fig:hetero_binary_landscape}.

\section{Setup and Problem Definitions}\label{sec:prelims}
We use standard graph-theoretic terminology~\citep{diestel}.
For any positive integer $n$, let $[n]=\{1,\ldots,n\}$.
We consider a set $\mathcal{\caches}$ of $ \caches := |\mathcal{\caches}|$ caches and a set $\mathcal{\users}$ of $ \users := |\mathcal{\users}|$ users, each of which has access to a subset of caches~(which could be determined by, e.g., routing or other network policies).
This naturally defines a bipartite graph $G=(\mathcal{\caches}, \mathcal{\users}, \mathcal{E})$, where $\mathcal{E}$ are the edges between the two independent sets $\mathcal{\caches}$ and $\mathcal{\users}$, as depicted in \cref{fig:system_model}. This is the predominant setting in the literature~\citep{femto,paria2021texttt,mhaisen2022optimistic,tsigkari2022approximation,salem2023toward}.
For all $u\in\mathcal{\users}$, the subset of caches that $u$ has access to is the neighborhood of $u$, denoted by $N(u)$. 
For all $c\in \caches$, the subset of users that $c$ serves is $N(c)$. The maximum degree of the network is  $\Delta:=\max_{v\in V(G)}|N(v)|$.
The users have access to a catalog $\mathcal{\catalog}$ of $\catalog := |\mathcal{\catalog}|$ contents, where the size of a content $s\in \mathcal{\catalog}$ is denoted by $\size(s)$; as is common in the literature~\citep{femto,blaszczyszyn2015optimal,paschos2019learning} and, without loss of generality, the content sizes are positive integers.
In our analysis, we consider both the cases of contents of  equal (homogeneous) and of variable (heterogeneous) sizes, where the former is equivalent to rescaling and fixing $\size(s)=1$ for all $s\in \mathcal{\catalog}$.

For each cache $c\in \mathcal{\caches}$, we denote by $\capa(c)\in \mathbb{Z^+}$ its capacity, and we denote the maximum capacity of a cache (over all caches) by $\mcap:=\max_{c\in \mathcal{\caches}}\capa(c)$.
As a cache with capacity equal to $\sum_{s\in \mathcal{\catalog}}\size(s)$ can store all the contents in $\mathcal{S}$, without loss of generality, we assume that $\mcap\leq \sum_{s\in \mathcal{\catalog}}\size(s)$, which, in the case of \prob, implies that $\mcap\leq \catalog$.
However, in real systems, $\mcap$ is far from this upper bound.
Indeed, in the case of Netflix, the cache capacity may be as little as 0.3\% of the catalog size~\citep{Paschos-misconceptions,netflix_appliances}. 

\begin{figure}[!t]
	\centering  
	\includegraphics[width=0.96\columnwidth, trim={1cm 7.6cm 4.5cm 4.3cm},clip]{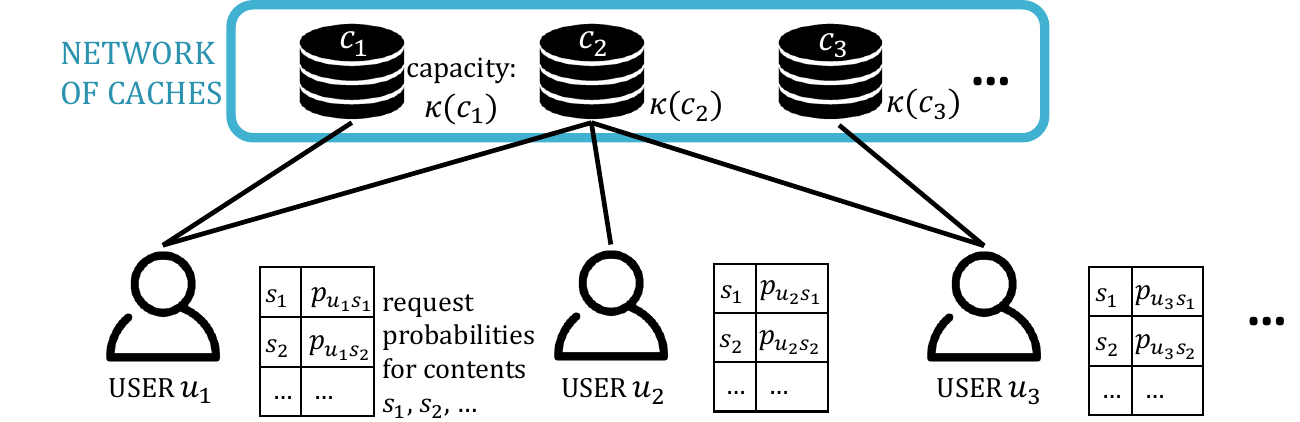} 
	\caption{Illustration of a bipartite graph between users and subsets of the caches in our setting.} 
	\label{fig:system_model}
\end{figure}

A user requests contents following a probability distribution.
Specifically, a user $u\in \mathcal{\users}$ requests content $s\in \mathcal{\catalog}$ with probability $p_{us}\in [0,1]$, such that $\sum_{s\in \mathcal{\catalog}} p_{us} =1$ for all $u\in \mathcal{\users}$. 
The values $p_{us}$ are considered to be known and represented as rational numbers; these may be based on past user requests, trending contents, etc. 
We denote by $\like$ the maximum number of contents any user may request, i.e., $\like := \max_{u\in \mathcal{\users}}  | \{ s\in \mathcal{\catalog}|\; p_{us} >0 \} |$. 
A request for content $s$ by user $u$ is served by a cache adjacent to user $u$ where the requested content is stored.
If the content is not stored in any of the adjacent caches, then the request is served by a large cache outside of $\mathcal{C}$ containing all contents; this will induce traffic at the backhaul network with a potential impact on user quality and retrieval costs~\citep{femto,poularakis2014}.
Thus, it is important to optimize the caching allocation in the ``smaller'' caches in proximity of the users; these are updated during off-peak hours~\citep{netflix_fill_patterns} and the cache placement is static between two updates.
Moreover, each user $u\in \mathcal{\users}$ is characterized by a priority weight $w(u)\in \mathbb{Q^+}$ that may, e.g., capture the guarantee of high quality of service~(QoS) for premium users in streaming platforms~\citep{disney_plus_pricing, netflix_plans_pricing}.

Let a \emph{caching allocation} $Z:\mathcal{\caches} \rightarrow 2^{\mathcal{\catalog}}$ be a mapping that assigns a subset of contents to each cache. If $Z$ complies with the size constraints of the individual caches, i.e., $\forall c\in \mathcal{\caches}: \sum_{s\in Z(c)}\size(s) \leq \capa(c)$, then it is \emph{feasible}.
For a user $u\in \mathcal{\users}$, the set of contents stored in its adjacent small caches (i.e., in $N(u)$) under a caching allocation $Z$ is defined as the set $H(u):=\{s\in \mathcal{\catalog}~|~\exists c\in N(u): s\in Z(c)\}$.
Potential requests for contents in $H(u)$ will induce cache hits. 
We can now define the \emph{cache hit rate}~\citep{femto,paschos2020cache} achieved by a caching allocation $Z$ to be $CH (Z) := \sum_{u\in\mathcal{\users}} \sum_{s\in H(u)} w(u) \cdot p_{us}$. 

As is common in complexity-theoretic studies, we consider the decision variant of the problem in question; we note that all of our algorithms can also solve the corresponding optimization problem and are constructive, i.e., they can output the optimal feasible caching allocation.
We can now define the problem archetype studied in this paper.

      \noindent
      \begin{center}
      \begin{tcolorbox}[title=\textsc{Network-Caching},left=0mm,top=0mm,bottom=0mm,right=0mm,boxsep=1mm]
        \begin{tabular}{@{}ll}
            \textbf{Input:\hspace{-0.3cm}} & \parbox[t]{.78\columnwidth}{A bipartite graph $G=(\mathcal{\caches}, \mathcal{\users}, \mathcal{E})$, the cache capacities $\capa(c_1)\ldots,\capa(c_{\caches})$, a set $\mathcal{\catalog}$ of contents of sizes $\size(s_1),\ldots,\size(s_{\catalog})$, the request probabilities $p_{us}$ for all $u\in \mathcal{\users}$ and $s\in \mathcal{\catalog}$, the user weights $w(u)$ for all $u\in \mathcal{\users}$, and $\score\in \mathbb{Q^+}$.}\\
            \addlinespace
            \textbf{Question:\hspace{-0.3cm}} & \parbox[t]{.78\columnwidth}{Is there a feasible caching allocation $Z$ such that $CH(Z)\geq \score$?}\\
        \end{tabular}
      \end{tcolorbox}
      \end{center}

As the complexity of \textsc{Network-Caching} depends on how the sizes of contents are encoded and represented,  we distinguish between the following three variants of it:

\begin{itemize}
\item In \textsc{Homogeneous Network-Caching} (\textsc{HomNC}), the sizes of the contents are the same, and thus, without loss of generality, are equal to $1$, i.e., $\size(s_1)=\ldots=\size(s_{\catalog})=1$. This restriction has been considered in past works~\citep{blaszczyszyn2015optimal,paschos2019learning}, and can be justified on the basis that contents could in practice be partitioned into equal-sized chunks and cached independently, see \citep{maggi2018adapting}.

\item In \textsc{Heterogeneous Network-Caching (Unary)} (\textsc{HetNC-U}), the numbers $\size(s_1),\ldots,\size(s_{\catalog})$ are encoded in unary. For our complexity-theoretic analysis, this is equivalent to assuming that the sizes of contents are not much larger than the input size. \probtwo\ captures the situation where we do not represent the sizes of contents by their exact bit size, but rather by a rough categorical scale. As an example from the video-on-demand setting, movies are typically roughly twice as long as 1-hour shows, which are then roughly twice as long as 30-minute shows, and hence, we can represent the sizes of these items by the integers $4$, $2$, and $1$, respectively.

\item In \textsc{Heterogeneous Network-Caching (Binary)} (\textsc{HetNC-B}), the numbers $\size(s_1),\ldots,\size(s_{\catalog})$ are encoded in binary, meaning that they can be exponentially larger than the input size. This situation may arise, e.g., if one were to simply use the exact sizes of contents.
\end{itemize}

As mentioned earlier, we study three fundamental variants of \textsc{Network-Caching} with respect to all combinations of the parameterizations discussed in Section~\ref{subsec:contributions}.
As $\caches$, $\mcap$, $\catalog$, and $\users$ are all natural input parameters of this problem, it is only logical to consider parameterizations by them.
On the other hand, we also consider the auxiliary parameters $\Delta$ and $\like$, as one could naturally think that this could lead to tractability.
Indeed, if each user only has access to a bounded number of caches and, additionally, each cache only serves a bounded number of users, then, intuitively, this should render the problem significantly easier.
Further, if each user only requests (with non-zero probability) a bounded number of contents, then, instinctively, this should also simplify the problem.
However, as our first lower bound result (Theorem~\ref{thm:hard-NAE-sat}), we surprisingly show that this is not the case.

\section{Establishing Upper Bounds: Algorithms}\label{sec:upper-bounds}

In this section, we provide all the tractability results delimiting the boundaries of tractability depicted in \cref{fig:hetero_binary_landscape}, i.e., algorithms.
We first analyze \hetprobbin, as any tractability results for this problem also carry over to the other two.
We use a direct branching argument to show:

\begin{theorem}\label{thm:C+S}
\hetprobbin\ is \FPT\ parameterized by $\caches+\catalog$.
\end{theorem}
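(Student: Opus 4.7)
The plan is to exploit the fact that, once we fix both the set of caches and the set of contents, the total number of possible allocations is bounded purely in terms of the parameter $\caches+\catalog$. Concretely, a caching allocation $Z$ is fully determined by specifying, for each of the $\caches\cdot\catalog$ cache–content pairs $(c,s)$, whether $s\in Z(c)$. Branching exhaustively on these $\caches\cdot\catalog$ binary choices yields $2^{\caches\cdot\catalog}$ candidate allocations, which is a function of $\caches+\catalog$ alone.

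For each candidate allocation $Z$ obtained at a leaf of the branching tree, I would first verify feasibility by computing, for every cache $c\in\mathcal{\caches}$, the sum $\sum_{s\in Z(c)}\size(s)$ and comparing it against $\capa(c)$. Even though the sizes are binary-encoded and may be exponentially larger than the input size, this sum and comparison can be performed in time polynomial in the bit-length of the input. Next, I would compute $CH(Z)$ by forming each $H(u)=\bigcup_{c\in N(u)}Z(c)$ and evaluating $\sum_{u\in\mathcal{\users}}\sum_{s\in H(u)}w(u)\cdot p_{us}$; this is again polynomial in $n$ since the $p_{us}$ are given as rationals in the input. The algorithm accepts if and only if some feasible $Z$ satisfies $CH(Z)\geq \score$. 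The overall running time is $2^{\caches\cdot\catalog}\cdot n^{\OO(1)}$, which matches the definition of \FPT\ with respect to $\caches+\catalog$.

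Correctness is immediate: the branching tree considers every possible mapping $Z:\mathcal{\caches}\to 2^{\mathcal{\catalog}}$, so any optimal feasible allocation is generated as some leaf, at which point its feasibility and its cache hit rate are computed exactly. Constructivity (recovering the witnessing allocation) also comes for free by returning the branch that achieves the best value.

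Honestly, there is no serious obstacle here: the parameter $\caches+\catalog$ bounds the search space so tightly that a single brute-force enumeration suffices. The only subtlety worth flagging is that, unlike in \hetprobun, the content sizes in \hetprobbin\ are binary-encoded, so one must explicitly note that the feasibility test and the hit-rate evaluation remain polynomial in the input length despite the numerical values of $\size(s)$ and $\capa(c)$ possibly being exponentially large; this follows because both steps involve only $\OO(\caches\cdot\catalog)$ additions and comparisons of numbers already present in the input.
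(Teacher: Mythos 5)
Your proposal is correct and follows essentially the same route as the paper: exhaustive enumeration of all $2^{\caches\cdot\catalog}$ caching allocations, followed by a polynomial-time feasibility check and hit-rate evaluation for each. Your explicit remark that the arithmetic on binary-encoded sizes remains polynomial in the input length is a welcome (and accurate) clarification, but it does not change the argument.
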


  \begin{proof}
We apply a brute-force algorithm that computes all the possible caching allocations and finds a feasible one with the largest cache hit rate.
For each cache, it branches over the $2^{\catalog}$ possibilities of storing contents in that cache.
Thus, there are $\OO(2^{\catalog \caches})$ possible caching allocations.
For each caching allocation, it takes $\OO(\caches\cdot \catalog)$ time to test if it is feasible and $\OO(\users\cdot \caches \cdot \catalog)$ time to compute its cache hit rate.
Thus, the algorithm's total runtime is 
$\OO(2^{\catalog \caches}\cdot \caches \cdot \catalog \cdot \users)$. 
Its correctness follows since it enumerates all caching allocations.
\end{proof} 

With Theorem~\ref{thm:C+S} in hand, we can provide a second fixed-parameter tractable fragment for \hetprobbin.

\begin{restatable}{corollary}{corUS}
\label{cor:U+S}
\hetprobbin\ is \FPT\ parameterized by $\users+\catalog$.
\end{restatable}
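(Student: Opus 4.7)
The plan is to reduce the input to an equivalent instance with at most $\catalog \cdot 2^{\users}$ caches, after which Theorem~\ref{thm:C+S} gives the desired fixed-parameter tractability in $\users + \catalog$. The reduction rests on the observation that, for each user $u$, the set $H(u) = \bigcup_{c \in N(u)} Z(c)$ decomposes as $\bigcup_{W \ni u} U_W$, where $U_W := \bigcup_{c:\, N(c) = W} Z(c)$ is the union of contents stored by the caches sharing the user-neighborhood $W$; hence $CH(Z)$ depends on $Z$ only through the per-neighborhood unions $(U_W)_W$.

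First, I would partition the caches into at most $2^{\users}$ groups according to their user-neighborhood $N(c) \subseteq \mathcal{U}$. Within each group, I would sort the cache capacities $\capa_1 \geq \capa_2 \geq \cdots \geq \capa_n$ and, while $n > \catalog$, delete a cache attaining the smallest capacity $\capa_n$. After this preprocessing, every group contains at most $\catalog$ caches, so the reduced instance has at most $\catalog \cdot 2^{\users}$ caches in total; invoking Theorem~\ref{thm:C+S} on it then completes the argument.

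The main step is to show that deleting the smallest-capacity cache from an oversized group preserves the family of subsets $U \subseteq \mathcal{S}$ that are realizable as $U_W$, which in turn preserves the optimum cache hit rate. A set $U$ is realizable iff it admits a partition $(X_1, \ldots, X_n)$ of $U$ with $\sum_{s \in X_i} \size(s) \leq \capa_i$ for every $i$. Since $|U| \leq \catalog < n$, the pigeonhole principle forces some bin $X_j$ in any such partition to be empty; if $j = n$ the cache may be dropped directly, and otherwise reassigning $X_n$ into bin $j$ (i.e., setting $X'_j := X_n$ and $X'_i := X_i$ for $i \in \{1, \ldots, n-1\} \setminus \{j\}$) is feasible because $\capa_j \geq \capa_n \geq \sum_{s \in X_n} \size(s)$, yielding a witness that survives the deletion. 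The converse direction is immediate. This pigeonhole-plus-swap is the only non-routine ingredient of the proof and is the step I expect to require the most care, since skipping it would raise the risk that the reduction loses optimal allocations that rely on distributing contents across many small-capacity caches within a single neighborhood.
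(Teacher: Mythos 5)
Your proposal is correct and follows essentially the same route as the paper's proof: group caches by their user-neighborhood, keep only the $\catalog$ largest-capacity caches in each of the at most $2^{\users}$ groups, and then invoke the brute-force algorithm of Theorem~\ref{thm:C+S}. Your pigeonhole-plus-swap argument is simply a more explicit write-up of the exchange step that the paper states more tersely, so there is nothing to add.
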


  \begin{proof}
We apply a simple preprocessing procedure where, for each set of $\catalog+1$ or more caches with the same neighborhood in $G$ (i.e., which are accessible by precisely the same set of users), we keep the $\catalog$ caches with the largest capacities and delete the rest.
We call the resulting graph $G'$. Note that it takes $\OO(\caches \cdot \log(\caches) \cdot \users)$ time to construct $G'$ from $G$.

To see that $(G,\score)$ is a yes-instance if and only if $(G',\score)$ is a yes-instance, for any set of caches with the same neighborhood in $G$, it suffices to note that, as there are only $\catalog$ contents, at most $\catalog$ of these caches can contain a distinct content stored in them that is not stored in any of the other caches in this set.
Thus, for any such set of caches, since only the caches with the smallest capacities are deleted, then any set of contents that can be stored prior to deleting these caches, can also be stored after deleting these caches.

Since there are $2^{\users}$ different subsets of $\mathcal{\users}$, there are at most $2^{\users}\cdot \catalog$ caches in $G'$.
At this point, it suffices to apply the brute force algorithm from Theorem~$1$ on $G'$.
Thus, the total runtime of this algorithm (including constructing $G'$ from $G$) is $\OO(\caches \cdot \log(\caches) \cdot \users + 2^{\catalog \cdot (2^{\users}\cdot \catalog)}\cdot 2^{\users}\cdot \catalog \cdot \catalog \cdot \users)=\OO(\caches \cdot \log(\caches) \cdot \users + 2^{\catalog^2 \cdot 2^{\users}+\users}\cdot \catalog^2 \cdot \users)$, and the correctness follows immediately from the argument provided in the previous paragraph.
\end{proof} 

We now have all the tractability results needed for \cref{fig:hetero_binary_landscape}~(top).
For \hetprobun, we contrast the situation for the binary case by designing a dynamic programming algorithm whose running time is \XP\ when parameterized by $\caches$ alone, and even \FPT\ when parameterized by $\caches+\mcap$.

\begin{restatable}{theorem}{thmxpc}
\label{thm:xp-c}
\hetprobun\ is \XP\ parameterized by $\caches$ and \FPT\ parameterized by $\caches+\mcap$. 
\end{restatable}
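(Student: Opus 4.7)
The plan is to design a dynamic programming algorithm that processes contents one at a time while tracking the remaining capacity of every cache. The key observation that enables this for \hetprobun\ is that the unary encoding of content sizes, together with the assumption $\mcap\le \sum_{s\in\mathcal{\catalog}}\size(s)$, implies that $\mcap$ is polynomially bounded in the input size $n$.

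First, I would fix an arbitrary ordering $s_1,\ldots,s_\catalog$ of the contents and precompute, for each pair $(T,s)$ with $T\subseteq\mathcal{\caches}$ and $s\in\mathcal{\catalog}$, the value $v_T(s):=\sum_{u\in\mathcal{\users}:\; N(u)\cap T\neq\emptyset} w(u)\cdot p_{us}$, which captures the cache hit rate contribution obtained when content $s$ is stored in precisely the caches of $T$. The DP table $D$ is indexed by pairs $(i,\vec r)$ with $i\in\{0,\ldots,\catalog\}$ and $\vec r\in\{0,\ldots,\mcap\}^\caches$, where $D[i,\vec r]$ is the maximum total contribution achievable by choosing placements for $s_1,\ldots,s_i$ such that cache $c_j$ has exactly $r_j$ units of free space remaining. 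The base case is $D[0,(\capa(c_1),\ldots,\capa(c_\caches))]=0$ (with other initial entries $-\infty$), and the transition at $(i,\vec r)$ iterates over all subsets $T\subseteq\mathcal{\caches}$ with $\size(s_{i+1})\le r_c$ for every $c\in T$, setting $D[i+1,\vec r\,'] \leftarrow \max\bigl(D[i+1,\vec r\,'],\,D[i,\vec r]+v_T(s_{i+1})\bigr)$, where $\vec r\,'$ equals $\vec r$ with $r_c$ decreased by $\size(s_{i+1})$ for each $c\in T$. The algorithm returns \emph{yes} iff $\max_{\vec r}D[\catalog,\vec r]\ge \score$, and the corresponding caching allocation is recovered by standard backtracking.

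Correctness rests on the fact that $CH(Z)$ decomposes additively over contents: a user $u$ contributes $w(u)\cdot p_{us}$ for content $s$ exactly when $N(u)$ meets the set of caches storing $s$, so aggregating over users yields $v_T(s)$ for each content and aggregating over contents yields $CH(Z)$. Since the DP enumerates all per-content placements subject to the capacity vector, its maximum matches the optimal $CH(Z)$ over feasible allocations. There are $(\catalog+1)\cdot(\mcap+1)^\caches$ states, each with $2^\caches$ transitions of cost $\OO(\caches)$ once the values $v_T(s)$ are available, while precomputing them takes $\OO(\catalog\cdot 2^\caches\cdot \users\cdot \caches)$ time; the total runtime is thus $\OO\bigl(\catalog\cdot(\mcap+1)^\caches\cdot 2^\caches\cdot\caches + \catalog\cdot 2^\caches\cdot \users\cdot \caches\bigr)$. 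This is \FPT\ parameterized by $\caches+\mcap$; for the \XP\ bound in $\caches$ alone, the unary encoding together with $\mcap\le \sum_{s\in\mathcal{\catalog}}\size(s)$ forces $\mcap\le n$, so the runtime collapses to $n^{\OO(\caches)}$.

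The main obstacle is conceptual rather than technical: one must recognize that the hit-rate objective, while coupling caches through the overlapping neighborhoods of users, is nevertheless additive across contents, which is what permits the capacity-vector state to decouple the placement decisions. Once this additivity is isolated via $v_T(s)$, the remaining work amounts to standard bookkeeping over a product-of-capacities state space, with the only mild subtlety being the precomputation of the $v_T(s)$ values to avoid paying an extra factor of $\users$ in the DP's inner loop.
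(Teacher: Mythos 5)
Your proposal is correct and follows essentially the same route as the paper's proof: a dynamic program over the vector of remaining cache capacities, processing contents one at a time and branching over the $2^{\caches}$ subsets of caches in which to store each content, with the $(\mcap+1)^{\caches}$ state space giving \FPT\ in $\caches+\mcap$ and, via the unary encoding, \XP\ in $\caches$ alone. The only difference is cosmetic: you precompute the per-content contributions $v_T(s)$ (the paper's $CH(\mathcal{\caches}',s_i)$) up front rather than inside the DP loop, which is a minor constant-factor optimization.
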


 \begin{proof}
Let $c_1,\ldots,c_{\caches}$ be the caches and $s_1,\ldots,s_{\catalog}$ the contents of the catalog.
Let $A$ be a $\caches$-dimensional array where, for all $i\in [\caches]$, the $i^{\text{th}}$ dimension of the array has size $\capa(c_i)+1$.
We use a zero-based indexing for $A$, that is, the indices start from $0$.
Essentially, each entry in $A$ will correspond to the best possible cache hit rate that can be achieved when the caches have the remaining capacities corresponding to the coordinates of the entry, with the value of the $i^{\text{th}}$ coordinate corresponding to the remaining capacity of the cache $c_i$ for all $i\in [\caches]$.
Initially, we set all of the entries of $A$ to $-1$ except for one entry that we set to $0$, whose coordinates correspond to all of the caches having their full capacities remaining (i.e., $A[\capa(c_1)]\ldots[\capa(c_{\caches})]=0$).
The idea here is that $-1$ is a placeholder that represents that a caching allocation resulting in the represented cache capacities cannot yet exist, which is indeed the case here since the capacities of the caches cannot be reduced without storing contents in them.
Then, we will update the entries of $A$ through a dynamic programming approach considering the contents one by one and all the possibilities of adding the content to the caches.
In order to perform this update properly, we will also have a second array $A'$ with $A'=A$ initially.
The algorithm proceeds as follows.

For $i=1$ to $i=\catalog$, do the following.
For each non-negative entry $A[j_1]\ldots[j_{\caches}]$ in $A$ and each of the $2^{\caches}$ subsets $\mathcal{\caches}'\subseteq \mathcal{\caches}$, possibly update the entries of $A'$ as follows.
Let $CH(\mathcal{\caches}',s_i)$ be the cache hits obtained only from storing $s_i$ in the caches of $\mathcal{\caches}'$.
For all $t\in [\caches]$, let $q_{t}=1$ if $c_t\in \mathcal{\caches}'$, and otherwise, let $q_t=0$.
If, for all $t\in [\caches]$, it holds that $j_t-(q_t \cdot \size(s_i))\geq 0$, and 
$$A[j_1] \ldots [j_{\caches}] + CH(\mathcal{\caches}',s_i) > $$ $$A'[j_1-(q_1 \cdot \size(s_i))] \ldots [j_{\caches}-(q_{\caches} \cdot \size(s_i))],$$
then set 
$$A'[j_1-(q_1 \cdot \size(s_i))] \ldots [j_{\caches}-(q_{\caches} \cdot \size(s_i))] =$$ $$A[j_1] \ldots [j_{\caches}] + CH(\mathcal{\caches}',s_i).$$
This ends the two innermost For loops.
Before incrementing $i$ by $1$, set $A=A'$.

Once the above algorithm is finished, let $\score'$ be the largest entry in $A'$. Note that $\score'$ corresponds to the maximum cache hit rate possible among all feasible caching allocations.
Hence, if $\score'\geq \score$, then return that this is a yes-instance, and otherwise, return that this is a no-instance.

Now, we prove the correctness of the algorithm by induction on the number $i$ of contents processed.
For the base case, $i=0$ and all the entries in $A'$ are $-1$ except for $A'[\capa(c_1)]\ldots,[\capa(c_{\caches})]$ which is $0$, and this is correct since it is not possible to have reduced cache capacities nor to have a positive cache hit rate without storing any contents in caches.
Trivially, all of the possible ways (in fact, the only way) of storing zero contents have been considered.
For the inductive hypothesis, suppose for some $0\leq i < \catalog$ that after the $i^{\text{th}}$ iteration of the outermost For loop has completed (before the first iteration of the outermost For loop has begun in the case $i=0$), each entry in $A'$ is correct and every possible feasible caching allocation for the first $i$ contents have been considered thus far.
Specifically, in this case we say that an entry in $A'$ is correct if (1) it is $-1$ if it is not possible to achieve the remaining cache capacities by only storing contents from the first $i$ contents or (2) it is the maximum possible value of the cache hit rate that can be achieved by a feasible caching allocation for the first $i$ contents such that the remaining capacities of the caches correspond to the coordinates of the entry in $A'$.

We now prove the inductive step for $i+1$.
In the $(i+1)^{\text{th}}$ iteration of the outermost For loop of the algorithm, initially $A=A'$.
Then, for each non-negative entry $A[j_1]\ldots[j_{\caches}]$ in $A$, all of the possible ways of caching the content $s_{i+1}$ in the caches whose remaining capacities correspond to the coordinates of $A[j_1]\ldots[j_{\caches}]$ are considered.
This covers all the possible feasible caching allocations for the first $i+1$ contents since, by the inductive hypothesis, all of the entries in $A$ are correct, and thus, the non-negative entries in $A$ should not be considered as they correspond to remaining cache capacities that are unattainable by storing contents from the first $i$ contents.
Now, the only entries in $A'$ that can be updated from considering $A[j_1]\ldots[j_{\caches}]$ are correctly those whose coordinates are attainable from $A[j_1]\ldots[j_{\caches}]$ by storing the content $s_{i+1}$ in a subset of the caches for all such subsets that do not result in negative coordinates.
For a subset $\mathcal{\caches}'\subseteq \mathcal{\caches}$ of the caches, the corresponding entry in $A'$ is only updated if the current value of that entry in $A'$ is less than $A[j_1]\ldots[j_{\caches}]+CH(\mathcal{\caches}',s_{i+1})$.
We now argue that this update step is correct.
First, if the above condition is not met, then the entry in $A'$ should not be updated.
Indeed, by the inductive hypothesis, there exists a feasible caching allocation of the first $i$ contents that achieves at least as large a cache hit rate as the one that would be obtained from $A[j_1]\ldots[j_{\caches}]$ and $\mathcal{\caches}'$, and they both have the same remaining cache capacities.
Second, if the above condition is met, then the entry in $A'$ should be updated.
Indeed, by the inductive hypothesis, there exists a feasible caching allocation for the first $i$ contents whose cache hit rate is $A[j_1]\ldots[j_{\caches}]$ and whose remaining cache capacities for the caches $c_1,\ldots,c_{\caches}$ are $j_1,\ldots,j_{\caches}$, respectively.
Thus, there also exists a feasible caching allocation for which the remaining cache capacities correspond to the coordinates of the entry of $A'$ being updated, and whose cache hit rate is $A[j_1]\ldots[j_{\caches}]+CH(\mathcal{\caches}',s_{i+1})$ by the definition of $CH(\mathcal{\caches}',s_{i+1})$.
Hence, all of the entries of $A'$ are correct after the $(i+1)^{\text{th}}$ iteration of the outermost For loop has completed, and thus, we have proven the inductive step.
It is then clear that $\score'$ corresponds to the maximum cache hit rate possible among all feasible caching allocations, and so, the algorithm decides correctly on whether or not it is a yes- or no-instance of the problem.

Finally, we prove the runtime of the algorithm.
For each $\mathcal{\caches}'\subseteq \mathcal{\caches}$ and $i\in [\catalog]$, it takes $\OO(\caches\cdot \users)$ time to compute $CH(\mathcal{\caches}',s_i)$ and to check whether, for all $t\in [\caches]$, $j_t-(q_t \cdot \size(s_i))\geq 0$.
Since the largest cache hit rate of any of the entries in $A'$ can be stored and updated in the last iteration of the outermost For loop, $\score'$ can be extracted from $A'$ in $\OO(1)$ time.
Thus, for any $i\in [\catalog]$, the total runtime for the $i^{\text{th}}$ iteration of the outermost For loop of the algorithm takes $\OO((\mcap+1)^{\caches} \cdot 2^{\caches}\cdot \caches \cdot \users)$ time.
As there are $\catalog$ such iterations of this For loop, the total runtime of the algorithm is $\OO((\mcap+1)^{\caches} \cdot 2^{\caches}\cdot \caches \cdot \users \cdot \catalog)$.  
\end{proof} 

The ideas from the above algorithm can be extended to show that \hetprobun\ is \XP\ parameterized by $\users+\mcap$. This is done by grouping caches together into types based on the users they serve and their capacities, and keeping track of the number of caches of each type that are left as we fill them.

\begin{restatable}{theorem}{thmxpud}
\label{thm:xp-ud}
\hetprobun\ is \XP\ parameterized by $\users+\mcap$.
\end{restatable}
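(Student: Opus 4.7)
The plan is to adapt the dynamic programming algorithm of Theorem~\ref{thm:xp-c}, but to replace the per-cache remaining-capacity state (which would blow up with $\caches$) by a per-type aggregated state that exploits the interchangeability of caches with the same neighborhood. Concretely, group the caches into \emph{types} according to their neighborhood in $\mathcal{\users}$: two caches $c,c'$ are of the same type iff $N(c)=N(c')$. Since $N(c)\subseteq \mathcal{\users}$, the number of types is at most $2^{\users}$. For each type $t$, let $n_t$ be its number of caches and let $N_t \subseteq \mathcal{\users}$ be its common neighborhood; we write $\mathcal{T}$ for the collection of types.

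The key observation is that within a type, the caches are fully interchangeable with respect to both feasibility and the cache hit rate: swapping the contents stored in two caches of the same type changes neither the occupied capacities nor $H(u)$ for any user $u$. Hence, the relevant ``partial allocation'' information for type $t$ reduces to a vector $\vec{\alpha}_t = (\alpha_{t,0},\alpha_{t,1},\ldots,\alpha_{t,\mcap})$, where $\alpha_{t,j}$ counts the number of caches of type $t$ whose remaining capacity equals $j$, subject to $\sum_{j=0}^{\mcap}\alpha_{t,j} = n_t$. The number of such vectors for type $t$ is at most $\binom{n_t+\mcap}{\mcap}\leq (\caches+1)^{\mcap}$, so the overall state space has size at most $(\caches+1)^{\mcap\cdot 2^{\users}}$, which is polynomial in $\caches$ for any fixed $\users+\mcap$.

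The DP processes contents $s_1,\dots,s_{\catalog}$ one by one, maintaining an array $F[i][(\vec{\alpha}_t)_{t\in\mathcal{T}}]$ equal to the maximum cache hit rate achievable using only the first $i$ contents while ending in the given capacity profile, with $F[0][\cdot]$ initialized so that only the profile in which every cache is at full capacity has value $0$. To transition from step $i$ to step $i+1$, for each type $t$ we choose a vector of non-negative integers $(\beta_{t,j})_{j\geq \size(s_{i+1})}$ with $\beta_{t,j}\leq \alpha_{t,j}$, indicating how many caches of type $t$ at remaining capacity $j$ are selected to store $s_{i+1}$; this updates $\vec{\alpha}_t$ by moving $\beta_{t,j}$ units from coordinate $j$ to coordinate $j-\size(s_{i+1})$. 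The number of such choices per type is at most $(\caches+1)^{\mcap+1}$, so the total number of transitions from any state is at most $(\caches+1)^{(\mcap+1)\cdot 2^{\users}}$, still XP in $\users+\mcap$. The cache hit contribution added by this transition is $\sum_{u\in \mathcal{\users}} w(u)\cdot p_{u,s_{i+1}} \cdot \mathbb{1}[\exists t\in\mathcal{T}:\ u\in N_t \text{ and } \sum_j\beta_{t,j}\geq 1]$, which depends only on the transition choice (not on past history), precisely because each content is processed exactly once and cache-hits are ``all-or-nothing'' per user.

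The main obstacle is convincing oneself that the aggregated state truly captures everything required for correctness, in particular that the sum-of-indicators form of the per-step contribution correctly matches the set-based definition $H(u)$: since each content is visited exactly once during the DP, a content $s_i$ enters $H(u)$ in the step at which it is placed into any cache of $N(u)$, and this is detected by the $\mathbb{1}[\cdot]$ above, so no double-counting is possible. A routine induction on $i$, analogous to that used in the proof of Theorem~\ref{thm:xp-c}, then shows that $F$ stores the correct optimum; comparing the best entry in the final array against $\score$ decides the instance. Multiplying the number of states, the number of transitions per state, the $\catalog$ stages, and an $\OO(\users \cdot 2^{\users})$ factor to evaluate the contribution yields an overall running time of the form $\catalog\cdot \users\cdot (\caches+1)^{f(\users,\mcap)}$, which is \XP\ parameterized by $\users+\mcap$.
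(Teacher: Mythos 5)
Your proposal is correct and follows essentially the same route as the paper's proof: both group caches into types by their user-neighborhood, track the number of caches at each remaining-capacity level per type as the DP state (your per-type histogram $\vec{\alpha}_t$ is exactly the paper's count of caches of type $\mathcal{\caches}_r^y$), and process contents one by one with per-step hit contributions. The only cosmetic difference is that you allow placing a content in several caches of the same type during a transition, whereas the paper restricts to at most one per type (harmless either way, since redundant placements are never needed for the optimum).
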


 \begin{proof}
Let $s_1,\ldots,s_{\catalog}$ be the contents of the catalog.
Also, let $\mathcal{\users}_1,\ldots,\mathcal{\users}_{2^{\users}}$ be the distinct subsets of $\mathcal{\users}$ and, for all $r\in [2^{\users}]$, let $\mathcal{\caches}_r:=\{c \mid N(c)=\mathcal{\users}_r~\text{and}~c\in \mathcal{\caches}\}$.
Further, for all integers $0\leq y\leq [\mcap]$, let $\mathcal{\caches}_r^y$ be the set of caches in $\mathcal{\caches}_r$ whose remaining capacities (after possibly storing some contents) are equal to $y$.
Thus, there are $T:=2^{\users}\cdot (\mcap+1)$ sets of the form $\mathcal{\caches}_r^y$ or, in other words, types of caches.
Let $A$ be a $T$-dimensional array where, for all $i\in [T]$, the $i^{\text{th}}$ dimension of the array corresponds to a cache type and has size $\caches+1$.
Specifically, for all $r\in [2^{\users}]$ and integers $0\leq y \leq \mcap$, the cache of type $\mathcal{\caches}_r^y$ corresponds to the $((r-1) \cdot (\mcap+1) + y + 1))^{\text{th}}$ dimension of $A$, and this dimension's coordinate corresponds to the number of caches of type $\mathcal{\caches}_r^y$ remaining.
We use a zero-based indexing for $A$, that is, the indices start from $0$.
Essentially, each entry in $A$ will correspond to the best possible cache hit rate that can be achieved when the numbers of each type of cache remaining correspond to the coordinates of the entry.
Initially, we set all of the entries of $A$ to $-1$ except for one entry that we set to $0$, whose coordinates correspond to the initial numbers of each type of cache before any contents are stored in caches.
The $-1$ is a placeholder and plays an analogous role as in the proof of Theorem~\ref{thm:xp-c}.
Then, we will update the entries of $A$ through a dynamic programming approach considering the contents one by one and all the possibilities of adding the content to types of caches.
In order to do this update properly, we will also have a second array $A'$ with $A'=A$ initially.
The algorithm proceeds as follows.

For $i=1$ to $i=\catalog$, do the following.
For each non-negative entry $A[j_1]\ldots[j_T]$ in $A$ and each of the $2^T$ subsets of $\mathcal{\caches}'\subseteq \{\mathcal{\caches}_r^y \mid r\in [2^{\users}]~\text{and}~0\leq y\leq \mcap\}$, possibly update the entries of $A'$ as follows.
Let $CH(\mathcal{\caches}',s_i)$ be the cache hits obtained only from storing $s_i$ in the types of caches of $\mathcal{\caches}'$.
When storing $s_i$ in the types of caches of $\mathcal{\caches}'$, this changes the number of certain types of caches remaining, possibly increasing or decreasing it for some.
To simplify matters, without loss of generality, from $j_1,\ldots,j_T$, after storing the content $s_i$ in the types of caches of $\mathcal{\caches}'$, let $j'_1,\ldots,j'_{T}$ be the resulting numbers of each cache type remaining.
If, for all $t\in [T]$, it holds that $j'_t\geq 0$, and 
$$A[j_1] \ldots [j_T] + CH(\mathcal{\caches}',s_i) > A'[j'_1] \ldots [j'_T],$$
then set 
$$A'[j'_1] \ldots [j'_T] =  A[j_1] \ldots [j_T] + CH(\mathcal{\caches}',s_i).$$
This ends the two innermost For loops.
Before incrementing $i$ by $1$, set $A=A'$.

Once the above algorithm is finished, let $\score'$ be the largest entry in $A'$ and note that $\score'$ corresponds to the maximum cache hit rate possible among all feasible caching allocations.
Hence, if $\score'\geq \score$, then return that this is a yes-instance, and otherwise, return that this is a no-instance.

Now, we prove the correctness of the algorithm by induction on the number $i$ of contents processed.
For the base case, $i=0$ and all the entries in $A'$ are $-1$ except for one which is $0$, whose coordinates correspond to the initial numbers of each type of cache before any contents are stored in caches, and this is correct since it is not possible to have reduced cache capacities nor to have a positive cache hit rate without storing any contents in caches.
Trivially, all of the possible ways (in fact, the only way) of storing zero contents have been considered.
For the inductive hypothesis, suppose for some $0\leq i < \catalog$ that after the $i^{\text{th}}$ iteration of the outermost For loop has completed (before the first iteration of the outermost For loop has begun in the case $i=0$), each entry in $A'$ is correct and every possible feasible caching allocation for the first $i$ contents have been considered thus far.
Specifically, in this case we say that an entry in $A'$ is correct if (1) it is $-1$ if it is not possible to achieve the remaining numbers of each cache type by only storing contents from the first $i$ contents or (2) it is the maximum possible value of the cache hit rate that can be achieved by a feasible caching allocation for the first $i$ contents such that the remaining numbers of each cache type correspond to the coordinates of the entry in $A'$.

We now prove the inductive step for $i+1$.
In the $(i+1)^{\text{th}}$ iteration of the outermost For loop of the algorithm, initially $A=A'$.
Then, for each non-negative entry $A[j_1]\ldots[j_T]$ in $A$, all of the possible ways of caching the content $s_{i+1}$ in types of caches whose remaining numbers correspond to the coordinates of $A[j_1]\ldots[j_T]$ are considered.
This covers all the possible feasible caching allocations for the first $i+1$ contents since, by the inductive hypothesis, all of the entries in $A$ are correct, and thus, the non-negative entries in $A$ should not be considered as they correspond to remaining numbers of cache types that are unattainable by storing contents from the first $i$ contents.
Also, it should be noted that the previous statement is true since there is no sense in storing the same content in two or more caches of the same type as they serve the same subset of the users.
Now, the only entries in $A'$ that can be updated from considering $A[j_1]\ldots[j_T]$ are correctly those whose coordinates are attainable from $A[j_1]\ldots[j_T]$ by storing the content $s_{i+1}$ in a subset of the cache types for all such subsets that do not result in negative coordinates.
For a subset $\mathcal{\caches}'\subseteq \{\mathcal{\caches}_r^y \mid r\in [2^{\users}]~\text{and}~0\leq y\leq \mcap\}$ of the cache types, the corresponding entry in $A'$ is only updated if the current value of that entry in $A'$ is less than $A[j_1]\ldots[j_T]+CH(\mathcal{\caches}',s_{i+1})$.
We now argue that this update step is correct.
First, if the above condition is not met, then the entry in $A'$ should not be updated.
Indeed, by the inductive hypothesis, there exists a feasible caching allocation of the first $i$ contents that achieves at least as large a cache hit rate as the one that would be obtained from $A[j_1]\ldots[j_T]$ and $\mathcal{\caches}'$, and they both have the same remaining numbers of cache types.
Second, if the above condition is met, then the entry in $A'$ should be updated.
Indeed, by the inductive hypothesis, there exists a feasible caching allocation for the first $i$ contents whose cache hit rate is $A[j_1]\ldots[j_T]$ and whose remaining numbers of cache types are $j_1,\ldots,j_T$, in that order.
Thus, there also exists a feasible caching allocation for which the remaining numbers of cache types correspond to the coordinates of the entry of $A'$ being updated, and whose cache hit rate is $A[j_1]\ldots[j_T]+CH(\mathcal{\caches}',s_{i+1})$ by the definition of $CH(\mathcal{\caches}',s_{i+1})$.
Hence, all of the entries of $A'$ are correct after the $(i+1)^{\text{th}}$ iteration of the outermost For loop has completed, and thus, we have proven the inductive step.
It is then clear that $\score'$ corresponds to the maximum cache hit rate possible among all feasible caching allocations, and so, the algorithm decides correctly on whether or not it is a yes- or no-instance of the problem.

Finally, we prove the runtime of the algorithm.
As in the proof of Theorem~\ref{thm:xp-c}, for each $\mathcal{\caches}'\subseteq \{\mathcal{\caches}_r^y \mid r\in [2^{\users}]~\text{and}~0\leq y\leq \mcap\}$ and $i\in [\catalog]$, it takes $\OO(\caches\cdot \users)$ time to compute $CH(\mathcal{\caches}',s_i)$.
It takes $\OO(T)$ time to check whether $j'_t\geq 0$ for each $t\in [T]$.
Also as in the proof of Theorem~\ref{thm:xp-c}, at the end $\score'$ can be extracted from $A'$ in $\OO(1)$ time.
Thus, for any $i\in [\catalog]$, the total runtime for the $i^{\text{th}}$ iteration of the outermost For loop of the algorithm takes $\OO((\caches+1)^{T} \cdot 2^T \cdot (\caches \cdot \users + T))$ time.
As there are $\catalog$ such iterations of this For loop, the total runtime of the algorithm is $\OO((\caches+1)^{T} \cdot 2^T \cdot (\caches \cdot \users + T) \cdot \catalog)=\OO((\caches+1)^{f(\users,\mcap)}\cdot \catalog)$ for some computable function $f$.  
\end{proof}

We now have all the tractability results we need to establish \cref{fig:hetero_binary_landscape}~(middle):
the upper bounds follow from Theorems~\ref{thm:C+S},~\ref{thm:xp-c},~and~\ref{thm:xp-ud} plus Corollary~\ref{cor:U+S}.
Finally, we establish that \prob\ is \XP\ parameterized by $\users$ through the following observation that allows us to bound the number of caches by a function of the number of users, after which the \XP\ algorithm provided in Theorem~\ref{thm:xp-c} can be applied.
The reason why this only works for \prob\ is that it amalgamates caches with the same neighborhood into a single cache with a larger capacity, which cannot be safely done (i.e., without changing the outcome of the problem) in the other settings due to the variable content sizes.

\begin{restatable}{observation}{obsusers}
\label{obs:users}
Any instance $(G,\score)$ of \prob\ with $\users$ users can be reduced in polynomial time to an equivalent instance $(G',\score)$ of \prob\ with at most $2^{\users}$ caches.
\end{restatable}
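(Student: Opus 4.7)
The plan is to exploit the fact that, when all contents have size~$1$, what matters for the cache hit rate of a user $u$ is only the set $H(u)=\bigcup_{c\in N(u)}Z(c)$, and therefore caches sharing exactly the same neighborhood in $G$ can be collapsed into a single cache of combined capacity without changing the optimum.

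More concretely, I would first partition the caches of $G$ according to their neighborhoods in $\mathcal{U}$. Since there are only $2^{\users}$ subsets of $\mathcal{U}$, this yields at most $2^{\users}$ groups $\mathcal{C}_{U'}:=\{c\in\mathcal{C}\mid N(c)=U'\}$ indexed by $U'\subseteq\mathcal{U}$. I then construct $G'$ by replacing each non-empty group $\mathcal{C}_{U'}$ with a single cache $c_{U'}$ whose neighborhood is $U'$ and whose capacity is $\kappa(c_{U'}):=\sum_{c\in \mathcal{C}_{U'}}\kappa(c)$. The catalog, request probabilities, user weights, and threshold $\ell$ all remain unchanged. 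This construction is clearly polynomial-time, and the number of caches in $G'$ is at most $2^{\users}$.

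For equivalence of $(G,\ell)$ and $(G',\ell)$, I would argue both directions. Given a feasible allocation $Z$ in $G$, define $Z'(c_{U'}):=\bigcup_{c\in\mathcal{C}_{U'}}Z(c)$. Since all content sizes are~$1$, $|Z'(c_{U'})|\le \sum_{c\in\mathcal{C}_{U'}}|Z(c)|\le\kappa(c_{U'})$, so $Z'$ is feasible; and for every user $u$, the set $H(u)$ is the same under $Z$ and $Z'$, so the cache hit rates coincide. Conversely, given a feasible allocation $Z'$ in $G'$, write $Z'(c_{U'})=\{s_1,\dots,s_t\}$ with $t\le\kappa(c_{U'})=\sum_{c\in\mathcal{C}_{U'}}\kappa(c)$, and greedily distribute these $t$ (unit-size) contents into the caches of $\mathcal{C}_{U'}$, each filled up to its individual capacity $\kappa(c)$; this yields a feasible $Z$ in $G$ with $\bigcup_{c\in\mathcal{C}_{U'}}Z(c)=Z'(c_{U'})$, so again $H(u)$ and hence $CH$ are preserved.

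The step requiring the most care is the backward direction, specifically verifying that the greedy redistribution succeeds precisely because contents are of unit size---this is the one place where the argument would break for \probtwo, since summing individual cache capacities can create ``room'' that no individual cache could actually accommodate for a large content. Aside from that caveat, the argument is essentially mechanical, and the bound of $2^{\users}$ caches in $G'$ follows immediately from the number of distinct neighborhoods.
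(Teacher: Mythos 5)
Your proposal is correct and follows exactly the same approach as the paper's proof: merge all caches sharing a neighborhood into one cache whose capacity is the sum of the merged capacities, which is sound precisely because contents have unit size. The paper simply states the equivalence as immediate, whereas you spell out the two directions (union for one, greedy redistribution for the other); this added detail is accurate but not a different argument.
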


\begin{proof}
To obtain $G'$ from $G$, for each set of $2$ or more caches with the same neighborhood in $G$, delete all the caches except for one, and give this cache a capacity equal to the sum of all the capacities of these deleted caches plus its own original capacity.
Since there are $2^{\users}$ different subsets of the users in $G$, there are at most $2^{\users}$ caches in $G'$.
This reduction takes polynomial time and the equivalence of the two instances is immediate as all the contents have unit size.
\end{proof} 

Theorem~\ref{thm:xp-c} together with Observation~\ref{obs:users} immediately yields  the following corollary.

\begin{corollary}\label{cor:xp-u}
\prob\ is \XP\ parameterized by $\users$.
\end{corollary}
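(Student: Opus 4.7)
The plan is to derive the corollary by composing the two preceding results in a direct way: first reduce the instance so that the number of caches is bounded by a function of $\users$ alone, and then run the \XP\ algorithm whose running time is \XP\ in $\caches$ on the reduced instance.

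More concretely, given an input instance $(G,\score)$ of \prob\ with $\users$ users, I would first invoke Observation~\ref{obs:users}, which in polynomial time produces an equivalent instance $(G',\score)$ of \prob\ whose cache count satisfies $\caches' \leq 2^{\users}$. Since \prob\ is the special case of \hetprobun\ in which every content has size $1$ (so the sizes are trivially encoded in unary), the instance $(G',\score)$ is also a valid instance of \hetprobun. I can therefore apply the \XP\ algorithm from Theorem~\ref{thm:xp-c} to $(G',\score)$.

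For the runtime, Theorem~\ref{thm:xp-c} solves \hetprobun\ in time $\OO((\mcap+1)^{\caches'} \cdot 2^{\caches'} \cdot \caches' \cdot \users \cdot \catalog)$. Using $\caches' \leq 2^{\users}$ and the general bound $\mcap \leq \catalog$ valid for \prob, this is at most $\OO((\catalog+1)^{2^{\users}} \cdot 2^{2^{\users}} \cdot 2^{\users} \cdot \users \cdot \catalog)$, which is of the form $n^{f(\users)}$ for a computable function $f$, matching the definition of \XP. Correctness is inherited directly from the correctness of Observation~\ref{obs:users} (the two instances are equivalent) and of Theorem~\ref{thm:xp-c}.

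There is essentially no obstacle here, since the heavy lifting has already been carried out in Theorem~\ref{thm:xp-c} and Observation~\ref{obs:users}; the only subtlety to keep in mind is that the amalgamation step of Observation~\ref{obs:users} is specific to \prob\ (uniform content sizes) and does not carry over to \hetprobun\ or \hetprobbin, which is precisely why the \XP\ result with respect to $\users$ alone only holds for \prob.
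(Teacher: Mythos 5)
Your proposal is correct and follows exactly the paper's route: the paper derives Corollary~\ref{cor:xp-u} precisely by combining Observation~\ref{obs:users} (to bound the number of caches by $2^{\users}$) with the \XP\ algorithm of Theorem~\ref{thm:xp-c} parameterized by $\caches$. The only detail worth a passing remark is that after amalgamating caches the capacities are summed and could nominally exceed $\catalog$, but capping them at $\catalog$ is without loss of generality, so your runtime bound stands.
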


Corollary~\ref{cor:xp-u} is the only additional tractability result needed for the landscape in \cref{fig:hetero_binary_landscape}~(bottom).

\section{Establishing Lower Bounds: Hardness}\label{sec:lower-bounds}

In this section, we establish the lower bounds required for the landscapes depicted in \cref{fig:hetero_binary_landscape}. 
As we are proving hardness results, it is advantageous to first consider parameterizations of \prob\ as they will carry over to \hetprobun\ and \hetprobbin. 
To simplify our exposition, we denote all of the instances constructed in our reductions as a bipartite graph~$G$ together with the target cache hit rate~$\score$.

We first strengthen the known result~\citep{femto} that \prob\ is \NP-hard even if $\catalog=\like=2$ and $\mcap=1$, by showing that the same holds even if additionally restricted to networks where $\Delta=3$.
We prove this via a reduction from the \NP-hard {\sc Monotone NAE-3-SAT-B3} problem~\citep{KT02}, whose definition is as follows.

\defproblem{{\sc Monotone NAE-3-SAT-B3}}
{A 3-CNF formula $\phi$ in which each clause contains $2$ or $3$ literals, and every variable appears in at most $3$ clauses and only in its positive form.}
{Is there an {\sc NAE} satisfying assignment for $\phi$, i.e., a truth assignment to the variables appearing in $\phi$ such that each clause in $\phi$ contains both a variable set to True and a variable set to False?}

\begin{restatable}{theorem}{thmparameters}
\label{thm:hard-NAE-sat}
\prob\ is \NP-hard, even if $\catalog=\like=2$, $\mcap=1$, and $\Delta=3$.
\end{restatable}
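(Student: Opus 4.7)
I propose a polynomial-time reduction from \textsc{Monotone NAE-3-SAT-B3}. The intuition is to interpret the two contents $s_1, s_2$ as the truth values ``True'' and ``False'', encode each variable as a unit-capacity cache that ``picks'' one value, and encode each clause as a user whose cache hit rate witnesses whether the clause is NAE-satisfied.

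Given an instance $\phi$ of \textsc{Monotone NAE-3-SAT-B3} with variable set $V_\phi$ and clause set $F_\phi$, I would construct an instance $(G, \score)$ of \prob\ as follows: set $\catalog = \{s_1, s_2\}$; for every variable $v \in V_\phi$, create a cache $c_v$ with $\capa(c_v) = 1$; for every clause $C \in F_\phi$, create a user $u_C$ with $w(u_C) = 1$, $N(u_C) = \{c_v : v \in C\}$, and $p_{u_C, s_1} = p_{u_C, s_2} = 1/2$; and set $\score = |F_\phi|$. The parameter bounds are then immediate: $\catalog = \like = 2$ by construction, $\mcap = 1$ since every cache has capacity $1$, and $\Delta = 3$ because each clause contains at most $3$ literals while each variable occurs in at most $3$ clauses.

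For correctness, any NAE-satisfying assignment $\alpha$ yields the feasible allocation $Z(c_v) = \{s_1\}$ when $\alpha(v) = $ True and $Z(c_v) = \{s_2\}$ otherwise; the NAE condition guarantees $\{s_1, s_2\} \subseteq H(u_C)$ for every clause $C$, so each $u_C$ contributes $1/2 + 1/2 = 1$ and $CH(Z) = |F_\phi| = \score$. Conversely, since every user contributes at most $1$, any feasible $Z$ with $CH(Z) \geq |F_\phi|$ must satisfy $H(u_C) = \{s_1, s_2\}$ for every clause $C$; defining $\alpha(v)$ to be True iff $s_1 \in Z(c_v)$ (and arbitrarily when $Z(c_v) = \emptyset$) then produces an NAE-satisfying assignment, because each clause contains both a variable whose cache stores $s_1$ and one whose cache stores $s_2$.

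The main step I would check carefully is the quantitative separation between yes- and no-instances: any clause-user missing one of the two contents in its neighborhood contributes at most $1/2$, so even a single non-NAE-satisfied clause drops the total strictly below $|F_\phi|$. The symmetric choice $p_{u_C, s_1} = p_{u_C, s_2} = 1/2$ is the decisive ingredient that makes this gap work; the remaining obligations---polynomial running time and polynomial bitlength of all numeric data---are routine, so the reduction establishes \NP-hardness under the claimed parameter restrictions.
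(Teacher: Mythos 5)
Your proposal is correct and is essentially identical to the paper's own proof: the same reduction from \textsc{Monotone NAE-3-SAT-B3} with one unit-capacity cache per variable, one unit-weight user per clause requesting the two contents (True/False) with probability $1/2$ each, and target $\score$ equal to the number of clauses, with the same forward and backward correctness arguments. No further comparison is needed.
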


\begin{proof}
We reduce from {\sc Monotone NAE-3-SAT-B3}.
Let $x_1,\ldots,x_n$ and $C_1,\ldots,C_m$ be the variables and clauses, respectively, of an input formula $\phi$ of {\sc Monotone NAE-3-SAT-B3}.
We construct an instance $(G,\score)$ of \prob\ from the incidence graph\footnote{The incidence graph, denoted by $G_{\phi}$, is the bipartite graph obtained by creating a vertex for each variable and clause in $\phi$, and adding an edge between a clause vertex and a variable vertex if and only if that variable is contained in that clause in $\phi$.} $G_{\phi}$ of $\phi$ as follows.
Each variable vertex $x_i$, $i\in [n]$, in $G_{\phi}$ corresponds to a cache $c_{x_i}$ of capacity $1$ in $G$.
There are only two contents in the catalog: True and False.
Each clause vertex $C_j$, $j\in [m]$, in $G_{\phi}$ corresponds to a user $u_{C_j}$ (of weight $1$) in $G$ that requests True and False with equal probability.
Thus, if a variable appears in a clause in $\phi$, then the variable's corresponding cache is adjacent to the clause's corresponding user in $G$.
Lastly, set $\score:=\users$.
This completes the construction of $(G,\score)$, which is clearly achieved in polynomial time.
Further, it is easy to verify that, in $(G,\score)$, $\catalog=2$, $\mcap=1$, $\like=2$, and $\Delta=3$ (as each variable appears in at most $3$ clauses in $\phi$).

For the first direction, assume that there is an {\sc NAE} satisfying assignment for $\phi$.
For each $i\in [n]$, if the {\sc NAE} satisfying assignment for $\phi$ sets the variable $x_i$ to True (False, resp.), then store the content True (False, resp.) in the cache $c_{x_i}$ in $G$.
The {\sc NAE} satisfying assignment for $\phi$ ensures that each clause in $\phi$ contains a variable set to True and one set to False.
Thus, this is a feasible caching allocation in $G$ that ensures that each user is adjacent to a cache that stored True and one that stored False, and so, it achieves a cache hit rate of $\score$.
Indeed, each clause in $\phi$ corresponds to a user in $G$ that is adjacent to $2$ or $3$ caches that correspond to the $2$ or $3$ variables contained in that clause in $\phi$.

For the other direction, assume that there is a feasible caching allocation in $G$ achieving a cache hit rate of $\score$.
This caching allocation ensures that each user $u_{C_j}$, $j\in [m]$, is adjacent to a cache that stored True and one that stored False.
For each variable $x_i$, $i\in [n]$, in $\phi$, set its truth value to the one that corresponds to the content stored in $x_i$'s corresponding cache $c_{x_i}$ in $G$ according to this caching allocation.
Then, this truth assignment is an {\sc NAE} satisfying assignment for $\phi$ since each clause contains a variable set to True and one set to False.
Indeed, each user in $G$ corresponds to a clause in $\phi$ that contains $2$ or $3$ variables corresponding to the $2$ or $3$ caches adjacent to that user in $G$.
\end{proof}

Theorem~\ref{thm:hard-NAE-sat} settles the complexity of our three problems with respect to all our considered parameters except for $\caches$ and $\users$. For these two parameters, the complexity differs depending on the problem and, as we showed in Section~\ref{sec:upper-bounds}, \prob\ is \XP\ parameterized by $\caches$ or $\users$. 
For \hetprobun, the three additional lower bound results we need to establish the landscape in \cref{fig:hetero_binary_landscape}~(middle) are \W\textup{[1]}-hardness with respect to $\caches+\lambda$ and $\caches+\users$, and \paraNP-hardness when parameterized by $\users$ alone. For the latter two lower bounds, we provide a simple reduction from \textsc{Unary Bin Packing}, which is not only \NP-hard, but also \W\textup{[1]}-hard when parameterized by the number of bins~\citep{Jansen2013}. Its definition is as follows.

\defproblem{{\sc Unary Bin Packing}}
{A set $I$ of item sizes that are positive integers encoded in unary, and $b,B\in \mathbb{Z}^+$.}
{Is there a partition of the items in $I$ into $b$ bins of capacity $B$?}

\begin{restatable}{theorem}{thmoneuser}
\label{thm:one-user}
\hetprobun\ is \NP-hard and also \W\textup{[1]}-hard parameterized by $\caches$, even if $\users=1$.
\end{restatable}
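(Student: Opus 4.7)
The plan is to reduce from \textsc{Unary Bin Packing}, which is \W[1]-hard parameterized by the number of bins $b$~\citep{Jansen2013}. Given an instance $(I, b, B)$ of this problem, I will construct an instance $(G,\score)$ of \hetprobun\ with a single user $u$ adjacent to exactly $b$ caches of capacity $B$, turning the question of whether the items fit into the bins into the question of whether all catalog contents can simultaneously be stored somewhere in $u$'s neighborhood.

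Specifically, I will introduce one user $u$ of weight $w(u)=1$, caches $c_1,\dots,c_b$ (all adjacent to $u$) with $\capa(c_i)=B$, and, for each item $i \in I$ of size $s_i$, a content of size $s_i$ requested by $u$ with probability $1/|I|$. I set $\score = 1$. Since item sizes in \textsc{Unary Bin Packing} are encoded in unary and the content sizes equal the item sizes, the output is a valid \hetprobun\ instance produced in polynomial time, with $\caches = b$ and $\users = 1$.

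For correctness, note that for any feasible caching allocation $Z$, $CH(Z) = w(u) \sum_{s \in H(u)} p_{us} = |H(u)|/|I|$, so $CH(Z) \ge \score = 1$ iff every content lies in $H(u)$, i.e., is stored in at least one cache. Given a bin packing, placing the content of each item into the cache corresponding to its bin yields such a feasible allocation (the cache capacities match the bin capacities exactly). Conversely, given a feasible $Z$ achieving $\score$, for each content I select one cache containing it; this yields an assignment of items to $b$ bins respecting capacity $B$, since the per-cache capacity constraints of $Z$ are inherited. As $\caches = b$, the reduction preserves the parameter, yielding both \NP-hardness of \hetprobun\ (from \NP-hardness of \textsc{Unary Bin Packing}) and \W[1]-hardness parameterized by $\caches$, even when $\users = 1$.

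The only subtlety I foresee is handling possible content duplication across caches in the reverse direction, but it is harmless: the extraction step uses only one cache per content, and discarding duplicates cannot violate the bin capacities. Aside from this book-keeping, the argument is direct and the parameter correspondence is immediate.
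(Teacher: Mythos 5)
Your reduction is exactly the one used in the paper: a single user of weight $1$ adjacent to $b$ caches of capacity $B$, one content per item of matching size requested with probability $1/|I|$, and $\score=1$, with the same forward/backward correspondence between packings and allocations achieving cache hit rate $1$. The proposal is correct and matches the paper's proof, with your remark about discarding duplicated contents in the reverse direction being a valid (and harmless) piece of book-keeping.
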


 \begin{proof}
We reduce from {\sc Unary Bin Packing}.
From an instance $(I,b,B)$ of {\sc Unary Bin Packing}, we construct an instance $(G,\score)$ of \hetprobun\ as follows.
There is a single user $u$ (of weight $1$) and $b$ caches of capacity $B$ that are all adjacent to $u$.
For each item size in $I$, there is a content of the same size and the user $u$ requests each of the $|I|$ contents with equal probability.
Set $\score:=1$.
This completes the construction of $(G,\score)$, which is clearly achieved in polynomial time.
Further, it is easy to verify that, in $(G,\score)$, $\caches=b$ and $\users=1$.

To establish correctness, it suffices to observe that our construction maintains a direct correspondence between storing a content in a cache (in the constructed instance of \hetprobun) and placing the corresponding item in the respective bin (in the original instance of {\sc Unary Bin Packing}). Hence, a caching allocation achieves a cache hit rate of $\ell=1$ if and only if the corresponding placement of items in bins represents a solution to the original {\sc Unary Bin Packing} problem.
\end{proof} 

For the last lower bound in the unary case, a direct adaptation of the reduction used in Theorem~\ref{thm:one-user}
yields that \hetprobun\ is \W\textup{[1]}-hard parameterized by $\caches$ when each user requests a single content.

\begin{restatable}{corollary}{corsplitusers}
\label{cor:split-users}
\hetprobun\ is \NP-hard and also \W\textup{[1]}-hard parameterized by $\caches$, even if $\like=1$.
\end{restatable}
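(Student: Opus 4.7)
The plan is to adapt the reduction from \textsc{Unary Bin Packing} given in the proof of Theorem~\ref{thm:one-user} by ``splitting'' the single user into one user per content, so that each user requests only one content (and hence $\like=1$) while preserving the essential structure of the reduction. Since \textsc{Unary Bin Packing} is \NP-hard and \W\textup{[1]}-hard when parameterized by the number $b$ of bins, and the adapted reduction will produce an instance of \hetprobun\ with $\caches=b$ and $\like=1$, both hardness claims will follow simultaneously.

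More concretely, starting from an instance $(I,b,B)$ of \textsc{Unary Bin Packing}, I would construct $(G,\score)$ as follows. As before, there are $b$ caches, each of capacity $B$, and for each item size in $I$ there is a content of that size. Instead of a single user, I would introduce $|I|$ users $u_1,\ldots,u_{|I|}$, where $u_i$ is adjacent to all $b$ caches and requests only the $i$-th content with probability $1$. To reproduce the ``all contents must be served'' condition, I would set each user weight to $w(u_i)=1/|I|$ and keep $\score:=1$ (equivalently, one may use unit weights and set $\score:=|I|$). Clearly, in the resulting instance, $\caches=b$, $\like=1$, and the construction runs in polynomial time, with sizes still encoded in unary.

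For correctness, the argument is essentially identical to the one in Theorem~\ref{thm:one-user}: a cache hit rate of $\score=1$ is attained if and only if every content appears in at least one cache, which in turn (given the cache capacities) is equivalent to a valid packing of the items into $b$ bins of capacity $B$. The key observation that makes the splitting safe is that every new user is adjacent to every cache, so the set $H(u_i)$ depends only on whether content $i$ is stored \emph{somewhere} in the network; thus the global feasibility condition for the bin packing instance is preserved verbatim.

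I do not anticipate a real obstacle here, since the adaptation is purely a ``distribution of probability mass across users'' trick and the parameter $\caches$ is untouched. The only minor care point is ensuring that weights and probabilities remain rational and that the encoding size does not blow up; using $w(u_i)=1/|I|$ with integer probabilities $p_{u_i s_i}=1$ handles this cleanly and keeps the whole construction polynomial in the size of the original unary-encoded instance.
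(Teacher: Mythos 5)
Your proposal is correct and follows essentially the same route as the paper: split the single user of the Theorem~\ref{thm:one-user} reduction into one user per content, each adjacent to all caches and requesting its content with probability $1$, and rescale the target accordingly (the paper uses unit weights with $\score:=|I|$, which is one of the two equivalent normalizations you mention). No gaps.
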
 

 \begin{proof}
Let $s_1,\ldots,s_{|I|}$ be the contents requested with non-zero probability by the single user $u$ in the reduction in the proof of Theorem~\ref{thm:one-user} (recall that $|I|$ is the number of items in the input instance of \textsc{Unary Bin Packing}).
We adapt that reduction as follows: instead of creating the single user $u$, for each $s\in s_1,\ldots,s_{|I|}$, we create a separate user $u_s$ of weight $1$. Each such user $u_s$ only requests the content $s$, specifically with probability $p_{u_ss}=1$, and is moreover adjacent to every cache in the instance. Finally, we set $\score:=|I|$ instead of $1$.

To establish correctness, we again observe that our construction maintains a direct correspondence between storing a content in a cache and placing the corresponding item in the respective bin. Hence, a caching allocation achieves a cache hit rate of $\ell=|I|$ if and only if the corresponding placement of items in bins represents a solution to the original {\sc Unary Bin Packing} problem.
\end{proof} 

When the sizes of the contents are encoded in binary, one can establish a much stronger notion of intractability (\paraNP-hardness) for \probtwo\ when parameterized by $\caches$. 
Indeed, as was observed for other variants of {\sc Network-Caching}~\citep{PIAKT18}, \hetprobbin\ admits a trivial polynomial-time reduction from the weakly \NP-hard {\sc $0$-$1$ Knapsack} problem~\citep{GareyJ79}, whose definition is as follows. 

\defproblem{{\sc $0$-$1$ Knapsack}}
{A knapsack of capacity $W\in \mathbb{Z}^+$, a set $I$ of items numbered from $1$ to $n$ with weights $w_i\in \mathbb{Z}^+$ and values $v_i\in \mathbb{Z}^+$ ($i\in [n]$), and $t\in \mathbb{Z}^+$, all encoded in binary.}
{Is there a subset of the items that can be placed in the knapsack (i.e., their total weight does not exceed $W$) such that their total value is at least $t$?}

\begin{restatable}{theorem}{knapsack}
\label{thm:knapsack}
\hetprobbin\ is \NP-hard, even if $\users=\caches=1$.
\end{restatable}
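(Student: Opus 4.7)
The plan is to give a straightforward polynomial-time reduction from \textsc{$0$-$1$ Knapsack} to \hetprobbin\ that produces an instance with a single cache and a single user, thereby establishing weak \NP-hardness even in this extremely restricted setting. The key observation is that with $\users=\caches=1$, the structural (graph-theoretic) aspects of \hetprobbin\ disappear entirely and the problem reduces to deciding which subset of contents, each with a size and a ``score'' (the weighted request probability), to pack into a single bounded-capacity cache in order to maximize the total score. This is precisely the Knapsack structure.

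Concretely, given a \textsc{$0$-$1$ Knapsack} instance with capacity $W$, items $1,\ldots,n$ of weights $w_i$ and values $v_i$, and target $t$, I would construct an instance $(G,\score)$ of \hetprobbin\ as follows. Create a single cache $c$ with $\capa(c):=W$ and a single user $u$ with $w(u):=1$, adjacent to $c$. For each item $i\in[n]$, introduce a content $s_i$ with $\size(s_i):=w_i$. Let $V:=\sum_{i=1}^{n} v_i$ and set the request probabilities to $p_{us_i}:=v_i/V$ for all $i\in[n]$ (which are valid rationals summing to $1$). Finally, set $\score:=t/V$. All numerical quantities, in particular the sizes $\size(s_i)=w_i$ and the capacity $\capa(c)=W$, inherit the binary encoding from the \textsc{$0$-$1$ Knapsack} instance, so the construction is carried out in polynomial time and conforms to the binary-encoded size convention of \hetprobbin. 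Trivially, $\users=\caches=1$ in $(G,\score)$.

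For correctness, I would observe that any feasible caching allocation $Z$ is determined by the subset $Z(c)\subseteq \mathcal{\catalog}$; feasibility $\sum_{s\in Z(c)} \size(s)\leq \capa(c)$ is exactly the knapsack constraint $\sum_{i\in I'} w_i \leq W$ for the corresponding item subset $I'=\{i : s_i\in Z(c)\}$. Since $u$ is adjacent to $c$, we have $H(u)=Z(c)$, and hence
\[
CH(Z) \;=\; \sum_{s\in H(u)} w(u)\cdot p_{us} \;=\; \frac{1}{V}\sum_{i\in I'} v_i.
\]
Therefore $CH(Z)\geq \score = t/V$ holds if and only if $\sum_{i\in I'} v_i \geq t$, which gives the equivalence of the two instances in both directions.

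Since the entire argument is a direct translation of the Knapsack structure into the single-cache, single-user fragment of \hetprobbin, I do not expect any substantive obstacle: the only mild point of care is ensuring that the request probabilities form a valid distribution (handled by the normalization factor $V$) and that $\score$ remains a positive rational (immediate from $t,V\in\mathbb{Z^+}$). Since \textsc{$0$-$1$ Knapsack} is \NP-hard under binary encoding, this chain of equivalences yields the claimed \NP-hardness of \hetprobbin\ with $\users=\caches=1$.
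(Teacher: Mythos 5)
Your reduction is correct and is essentially identical to the paper's own proof: the same single cache of capacity $W$, single unit-weight user, contents of sizes $w_i$ with request probabilities $v_i/\sum_i v_i$, and target $t/\sum_i v_i$. Your explicit verification via $H(u)=Z(c)$ just spells out the equivalence the paper calls ``immediate.''
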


\begin{proof}
From an instance $(W,I,t)$ of {\sc $0$-$1$ Knapsack}, we construct an instance $(G,\score)$ of \hetprobbin\ as follows.
For each $i\in [n]$, there is a content $s_i$ of size $\size(s_i):=w_i$ representing the item $i$ of weight $w_i$ from the {\sc $0$-$1$ Knapsack} instance. Further, there is one cache of capacity $W$ representing the knapsack and there is one user $u$ (of weight $1$) adjacent to this cache such that $p_{us_i}:=v_i/\sum\limits_{i=1}^{n} v_i$ for all $i\in [n]$. Set $\score:=t/\sum\limits_{i=1}^{n} v_i$.
This completes the construction of $(G,\score)$, which is clearly achieved in polynomial time.

By observing that the values of the items and $t$ have been normalized so that their corresponding request probabilities and $\score$, respectively, are in the range $[0,1]$, the equivalence of the two instances is immediate.
Indeed, storing a content in the cache corresponds to placing its corresponding item in the knapsack as the size and request probabilities of the contents correspond to the weights and (relative) values of the items, respectively, the cache and the knapsack have the same capacity, and thus, a cache hit rate of at least $\score$ is achieved if and only if there is a subset of the items of total value at least $t$ that can be placed in the knapsack.
\end{proof} 

A straightforward corollary of the proof of Theorem~\ref{thm:knapsack} yields that \hetprobbin\ is \NP-hard, even if there is only one cache and each user requests a single content.

\begin{restatable}{corollary}{corsplituk}
\label{cor:split-users-knapsack}
\hetprobbin\ is \NP-hard, even if $\caches=\like=1$.
\end{restatable}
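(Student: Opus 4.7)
The plan is to adapt the reduction from Theorem~\ref{thm:knapsack} in precisely the same way that Corollary~\ref{cor:split-users} adapts Theorem~\ref{thm:one-user}: replace the single user~$u$ (who in the knapsack reduction requests all $n$ contents, forcing $\like=n$) by $n$ distinct users $u_1,\ldots,u_n$, each requesting a single content. Since \hetprobbin\ allows arbitrary positive rational user weights, I can even dispense with the normalization by $\sum_i v_i$ used in the proof of Theorem~\ref{thm:knapsack}.

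Concretely, starting from an instance $(W,I,t)$ of {\sc $0$-$1$ Knapsack}, I would construct $(G,\score)$ as follows. Keep the single cache of capacity $W$ (representing the knapsack), and for each $i\in[n]$ create a content $s_i$ of size $\size(s_i):=w_i$. For each $i\in[n]$, create a user $u_i$ adjacent to the cache, with weight $w(u_i):=v_i$ and request distribution $p_{u_is_i}:=1$ (so $u_i$ requests no other content). Finally set $\score:=t$. The construction is polynomial, $\caches=1$, and each user requests exactly one content so $\like=1$.

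For correctness, any feasible caching allocation corresponds to a subset $S\subseteq\{s_1,\ldots,s_n\}$ stored in the unique cache with $\sum_{s_i\in S}w_i\leq W$, and its cache hit rate is
\[
\sum_{i\in[n]}\sum_{s\in H(u_i)} w(u_i)\cdot p_{u_is}\;=\;\sum_{s_i\in S} v_i.
\]
Hence achieving cache hit rate at least $\score=t$ is equivalent to selecting items from $I$ with total weight at most $W$ and total value at least $t$, i.e., to solving the original {\sc $0$-$1$ Knapsack} instance.

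There is no real obstacle here: the splitting trick is exactly the one already used in Corollary~\ref{cor:split-users}, and the only thing worth verifying is that the cache hit rate decomposes additively over the newly introduced single-request users, which is immediate from its definition and the fact that the users share the sole cache as their common neighborhood. Thus the equivalence of the two instances follows exactly as in Theorem~\ref{thm:knapsack}.
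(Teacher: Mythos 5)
Your proposal is correct and follows essentially the same route as the paper: replace the single knapsack user by one single-request user per content, all adjacent to the unique cache. The only (cosmetic) difference is that you set $w(u_i)=v_i$ and $\score=t$ directly instead of normalizing by $\sum_i v_i$ as the paper does, which is equally valid since user weights may be arbitrary positive rationals.
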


 \begin{proof}
Let $s_1,\ldots,s_{|I|}$ be the contents requested with probabilities $p_{us_1},\ldots,p_{us_{|I|}}$, respectively, by the single user $u$ in the reduction in the proof of Theorem~\ref{thm:knapsack} (recall that $|I|$ is the number of items in the input instance of {\sc $0$-$1$ Knapsack}).
We adapt that reduction as follows: instead of creating the user $u$, for each $s\in s_1,\ldots,s_{|I|}$, we create a separate user $u_s$ of weight $p_{us}$. Each such user $u_s$ only requests the content $s$, specifically with probability $p_{u_ss}=1$, and is adjacent to the only cache in the instance. We set $\score:=t/\sum\limits_{i=1}^{n} v_i$ as above.

To establish correctness, we again observe that our construction maintains a direct correspondence between storing a content in a cache and placing the corresponding item in the knapsack. Hence, a caching allocation achieves a cache hit rate of $\score:=t/\sum\limits_{i=1}^{n} v_i$ if and only if a subset of the items of total value at least $t$ can be placed in the knapsack.
\end{proof}

For some parameterizations, our results do not resolve whether they are \FPT\ or \W\textup{[1]}-hard. In the final technical Section~\ref{sec:interreduce}, we show that these open cases are interreducible.

\section{Structural Parameters}\label{sec:structural}
In this section, we discuss an alternative approach towards identifying tractable fragments of \prob, specifically by exploiting well-established structural properties of the network.
As we have seen, \hetprobbin\ is \NP-hard even if the network consists of a single edge, and \hetprobun\ is \NP-hard even in star networks.
Thus, they both remain \paraNP-hard when parameterized by not only the fundamental structural parameter treewidth~\citep{RobertsonS86}, but also by essentially all other established graph parameters including, e.g., treedepth~\citep{sparsity}, the feedback edge number~\citep{GanianK21,BredereckHKN22}, and the vertex cover number~\citep{BodlaenderGP23,Chalopin24}. Moreover, this implies that both problems remain \NP-hard on planar networks---a property which is particularly relevant in the studied setting (see also Section~\ref{subsec:contributions}).
However, none of the hardness results presented thus far rule out tractability for \prob\ with respect to these structural parameters or planarity.
As our next result, we show that most established structural parameters like treewidth, treedepth, and feedback edge number, as well as planarity, do not help even when dealing with the simpler homogeneous setting. We achieve this via a reduction from the following problem.

\defproblem{{\sc Maximum $k$-Vertex Cover}}
{A graph $G$ and two positive integers $k$ and $t$.}
{Is there a subset of vertices $V'\subseteq V(G)$ such that $|V'|\leq k$ and at least $t$ edges in $G$ contain a vertex from $V'$?}

\begin{restatable}{theorem}{thmvcred}
\label{thm:vc-red}
\prob\ is 
\NP-hard even if $\like=2$ and $G$ is a star whose edges have each been subdivided once. Moreover, in this case it is also \W\textup{[1]}-hard parameterized by $\mcap$.
\end{restatable}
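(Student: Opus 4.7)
I would prove both statements via a single parameterized reduction from \textsc{Maximum $k$-Vertex Cover}, which is well-known to be \NP-hard and \W\textup{[1]}-hard parameterized by $k$. Given an instance $(H,k,t)$ with $H=(V_H,E_H)$ and $|E_H|=m$, the constructed instance $(G,\score)$ of \prob\ consists of: a central cache $c^*$ of capacity $k$; for every edge $e=\{v,w\}\in E_H$, a leaf cache $c_e$ of capacity $1$ together with a user $u_e$ of weight $1$ adjacent to both $c^*$ and $c_e$; a catalog $\{s_v\mid v\in V_H\}$ of unit-size contents; and request probabilities $p_{u_e,s_v}=p_{u_e,s_w}=\tfrac{1}{2}$. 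The target is $\score:=\tfrac{m+t}{2}$. By construction, $G$ is a star centered at $c^*$ with leaves $\{c_e\}_{e\in E_H}$ whose edges are each subdivided once by the corresponding $u_e$; moreover $\like=2$ and $\mcap=k$.

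For correctness, given any feasible allocation $Z$, let $V':=\{v\in V_H\mid s_v\in Z(c^*)\}$; since all contents have unit size, $|V'|\le k$. The key step is to establish that the contribution of user $u_e$ (for $e=\{v,w\}$) to the hit rate equals $1$ whenever $V'$ covers $e$, and is at most $\tfrac{1}{2}$ otherwise. The first bound follows because, if say $v\in V'$, then placing $s_w$ in $c_e$ makes both $s_v,s_w$ available to $u_e$ via $c^*$ and $c_e$, respectively; the second bound follows because the unit capacity of $c_e$ forces at most one of $s_v,s_w$ to lie in $Z(c^*)\cup Z(c_e)$. Summing over $E_H$, the optimal hit rate attainable for a fixed $V'$ equals $\tfrac{m+q}{2}$, where $q$ is the number of edges of $H$ covered by $V'$. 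Hence $CH(Z)\ge\score$ if and only if $q\ge t$, which closes the equivalence.

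Since $\mcap=k$ and the reduction runs in polynomial time, \W\textup{[1]}-hardness of \textsc{Maximum $k$-Vertex Cover} parameterized by $k$ transfers directly, and \NP-hardness follows by the same construction. The main subtlety---and really the only design choice to get right---is the calibration of the leaf capacities to exactly $1$: capacity $0$ would reduce the hit rate to counting edges with both endpoints selected (a quadratic objective unrelated to vertex coverage), whereas capacity $\geq 2$ would let every leaf store both of its associated contents, fully decoupling the hit rate from $Z(c^*)$ and trivializing the instance. With unit leaf capacities, uncovered edges contribute exactly $\tfrac{1}{2}$ and covered edges contribute exactly $1$, yielding the clean $\tfrac{1}{2}$-per-covered-edge gap that drives the reduction.
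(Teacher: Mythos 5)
Your proposal is correct and uses essentially the same reduction as the paper: both reduce from \textsc{Maximum $k$-Vertex Cover} by building a subdivided star with a central cache of capacity $k$, one unit-capacity leaf cache and one weight-$1$ user per edge of $H$ requesting the two endpoint contents with probability $\tfrac12$ each, and target $\score=(m+t)/2$. The correctness argument (covered edges contribute $1$, uncovered edges at most $\tfrac12$) matches the paper's, so there is nothing further to add.
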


\begin{figure}
	\centering  
\includegraphics[scale=0.6]{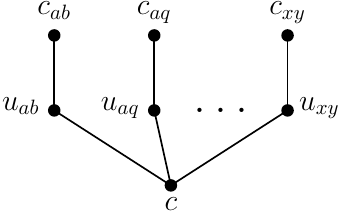}
\caption{Illustration of the subdivided star $G'$ constructed in the proof of Theorem~\ref{thm:vc-red}, where $ab$, $aq$, and $xy$ are edges in the graph $G$ from the instance of {\sc Maximum $k$-Vertex Cover}.}\label{fig:subdivided_star}
\end{figure}

  \begin{proof}
We reduce from {\sc Maximum $k$-Vertex Cover}.
From an instance $(G,k,t)$ of {\sc Maximum $k$-Vertex Cover}, we construct an instance $(G',\score)$ of \prob\ as follows.
For each edge $xy\in E(G)$, there is a user $u_{xy}$ (of weight $1$) in $G'$ that only requests (with non-zero probability) the contents $x$ and $y$ with equal probability (i.e., $p_{u_{xy}x}=p_{u_{xy}y}=0.5$), and a cache $c_{xy}$ of capacity $1$ that is adjacent only to $u_{xy}$.
There is also one cache $c$ of capacity $k$ that is adjacent to every user in $G'$.
Lastly, set $\score:=(\users+t)/2$.
This completes the construction of $(G',\score)$, which is clearly achieved in polynomial time.
Further, it is easy to verify that, in $(G',\score)$, $\like=2$, $\mcap=k$, and $G'$ is a star whose edges have each been subdivided exactly once.
See Figure~\ref{fig:subdivided_star} for an illustration of $G'$.

For the first direction, assume that there is a subset of vertices $V'\subseteq V(G)$ such that $|V'|\leq k$ and at least $t$ edges in $G$ contain a vertex from $V'$.
For each vertex $v\in V'$, store the content $v$ in the cache $c$ in $G'$.
For each edge $xy\in E(G)$ that contains exactly one vertex from $V'$, say $x$, store the content $y$ in the cache $c_{xy}$.
For the rest of the caches of the form $c_{xy}$, cache either the content $x$ or the content $y$.
Then, this is a feasible caching allocation in $G'$ that ensures that at least $t$ users have both of their contents (that they requested with non-zero probability) stored in adjacent caches, and the remaining at most $\users-t$ users have exactly one of their contents (that they requested with non-zero probability) stored in an adjacent cache.
Thus, this feasible caching allocation in $G'$ achieves a cache hit rate of $t+(\users-t)/2=(\users+t)/2=\score$.

For the other direction, assume that their is a feasible caching allocation in $G'$ achieving a cache hit rate of $\score$.
Initially, set $V':=\emptyset$.
For each content $v$ stored in the cache $c$ by this caching allocation, add the vertex $v$ to $V'$.
Then, $V'$ contains at most $k$ vertices.
Since this caching allocation in $G'$ achieves a cache hit rate of $\score$, there are at least $t$ users in $G'$ for which both of the contents that they requested with non-zero probability are stored in adjacent caches.
In particular, at least one of the contents they requested with non-zero probability is stored in the cache $c$.
Since each of those users in $G'$ corresponds to an edge in $G$ containing both of the vertices that correspond to the contents requested with non-zero probability by that user, the set $V'$ satisfies the desired property in $G$.
That is, $|V'|\leq k$ and at least $t$ edges in $G$ contain a vertex from $V'$.
\end{proof} 

The previous theorem rules out tractability even on planar networks, but does not do so when $\Delta$ and $\mcap$ are constants.
In our next result, we rule out tractability for \prob, even when restricted to planar networks where $\Delta$, $\like$, and $\mcap$ are all fixed constants.
We establish this via a reduction from the \NP-hard {\sc Planar 3-SAT-E3} problem~\citep{MP93}, whose definition is as follows.

\defproblem{{\sc Planar 3-SAT-E3}}
{A 3-CNF formula $\phi$ in which each clause contains $2$ or $3$ literals, each variable appears in exactly $3$ clauses, and its incidence graph is planar.}
{Is there a satisfying assignment for $\phi$?}

\begin{restatable}{theorem}{thmsatred}
\label{thm:sat-red}
\prob\ is \NP-hard, even if $G$ is planar, $\mcap=1$, $\like=3$, and $\Delta=5$.
\end{restatable}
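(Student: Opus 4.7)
The plan is to reduce from {\sc Planar 3-SAT-E3} using only two contents, ``True'' (T) and ``False'' (F): each variable's assignment will be encoded by a single cache storing T or F, and each clause will be checked by a single user that succeeds iff some adjacent cache stores T. For each variable $x_i$, I would introduce a capacity-one cache $c_i$ whose stored content encodes the intended truth value $\alpha(x_i)$. For each occurrence of a negative literal $\neg x_i$ in a clause $C_j$, I would further introduce an ``inverter gadget'' consisting of a new capacity-one cache $c^{j,-}_i$ and an ``inverter user'' $u^{i,j}_{\mathrm{inv}}$ of weight $1$ that is adjacent to both $c_i$ and $c^{j,-}_i$ and requests T and F each with probability $1/2$. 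For each clause $C_j$, I would introduce a clause user $u_j$ of weight $1$ that requests only T (with probability $1$) and is adjacent to $c_i$ for every positive literal $x_i \in C_j$ and to $c^{j,-}_i$ for every negative literal $\neg x_i \in C_j$.

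The bipartite graph $G$ so produced is obtained from the planar incidence graph of $\phi$ by replacing each variable by a cache, each clause by a user, and subdividing each negative-literal edge into a length-three path through the corresponding inverter gadget; hence $G$ is bipartite and planar. I would then set $\score := N_{\mathrm{inv}} + m$, where $m$ is the number of clauses and $N_{\mathrm{inv}}$ is the total number of negative-literal occurrences in $\phi$.

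The correctness argument I would use leverages the fact that every user's individual contribution to $CH$ is at most $1$, so $CH(Z) \leq N_{\mathrm{inv}} + m = \score$ for every feasible allocation $Z$. Equality holds iff every user simultaneously attains its maximum of $1$, which means that (i) for every inverter gadget, $c_i$ and $c^{j,-}_i$ are both non-empty and store different contents (so $|H(u^{i,j}_{\mathrm{inv}})|=2$), and (ii) for every clause user $u_j$, some adjacent cache stores T. Under condition (i), defining $\alpha(x_i)$ to be the content of $c_i$, condition (ii) for $u_j$ translates exactly into the existence of a satisfied literal in $C_j$: a positive literal $x_i$ contributes a T via $c_i$ iff $\alpha(x_i)=T$, while a negative literal $\neg x_i$ contributes a T via $c^{j,-}_i = \neg c_i$ iff $\alpha(x_i)=F$. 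Consequently, $CH(Z) = \score$ is achievable iff $\phi$ admits a satisfying assignment.

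For the structural parameters: $\mcap = 1$ by construction; each inverter user requests two contents and each clause user requests one, so $\like \leq 2 \leq 3$; and the maximum degree $\Delta$ is at most $3 \leq 5$, since variable caches inherit the three occurrences of the variable in $\phi$, clause users have at most three literals, and every inverter gadget vertex has degree exactly two. Planarity of $G$ follows since the construction only performs edge subdivisions on a planar graph. The main obstacle I anticipate is carefully justifying the inverter semantics in the clause-satisfaction direction; once it is argued that perfect inverters are \emph{forced} by the sharp upper bound $\score = N_{\mathrm{inv}} + m$, the rest of the correctness follows mechanically by reading off the assignment from the variable caches.
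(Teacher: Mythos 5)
Your reduction is correct, but it is a genuinely different construction from the paper's. The paper also reduces from {\sc Planar 3-SAT-E3}, but it uses one content per \emph{literal} (a catalog of size $2n$): each variable gets a capacity-one cache $c_{x_i}$ that is forced to hold exactly one of the contents $x_i,\overline{x}_i$ by an auxiliary cache $c_{x_i}'$ and two single-content demand users; each clause user requests its $2$ or $3$ literal-contents uniformly and is given $1$ or $2$ private ``absorber'' caches, so that full satisfaction of the clause user is equivalent to at least one of its literal-contents being supplied by a variable cache. This yields $\like=3$ and $\Delta=5$ (a variable cache meets $3$ clause users plus $2$ demand users). You instead work over a two-letter alphabet $\{T,F\}$, realize negation by inverter gadgets that subdivide the negative-literal edges of the incidence graph, and let clause users demand only $T$; the shared skeleton of the two proofs is the ``target equals total user weight, hence every user must be fully satisfied'' argument, and in both cases planarity is preserved because all gadgets are local (yours are edge subdivisions, the paper's are pendant attachments). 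Your version actually gives stronger constants --- $\catalog=\like=2$, $\Delta=3$, $\mcap=1$ on planar instances --- which matches the parameters of the paper's Theorem~\ref{thm:hard-NAE-sat} while adding planarity (something the NAE-based reduction there cannot give, since planar NAE-SAT is tractable). Two small points to make explicit: (i) the stated theorem with $\like=3$ and $\Delta=5$ follows from your result since your instances form a subclass (pad trivially if exact equality of the parameters is insisted upon); and (ii) in the backward direction you should note that $c_i$ may be empty when $x_i$ has no negative occurrence, in which case setting $\alpha(x_i)=F$ (i.e., $\alpha(x_i)=T$ iff $c_i$ stores $T$) still yields a satisfying assignment because such a $c_i$ then contributes no cache hit to any clause user. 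With those remarks added, the argument is complete.
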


  \begin{proof}
We reduce from {\sc Planar 3-SAT-E3}.
From an instance $\phi$ of {\sc Planar 3-SAT-E3}, we construct an instance $(G,\score)$ of \prob\ as follows.
Let $x_1,\ldots,x_n$ and $C_1,\ldots,C_m$ be the variables and clauses in $\phi$, respectively.
For each of the literals $x_1,\overline{x}_1,\ldots,x_n,\overline{x}_n$ in $\phi$, there is a content in the catalog with the same name.
We construct $G$ from the planar incidence graph $G_{\phi}$ of $\phi$ as follows.
Each variable vertex $x_i$, $i\in [n],$ in $G_{\phi}$ corresponds to a cache $c_{x_i}$ of capacity $1$ in $G$.
Each clause vertex $C_j$, $j\in [m]$, in $G_{\phi}$ corresponds to a user $u_{C_j}$ (of weight $1$) in $G$ that only requests (with non-zero probability) the contents corresponding to the literals that the clause $C_j$ contains, and $u_{C_j}$ requests these contents with equal probability.
Thus, if a variable appears in a clause in $\phi$, then the variable's corresponding cache is adjacent to the clause's corresponding user in $G$.
Further, for each cache $c_{x_i}$, $i\in [n]$, there is an additional cache $c_{x_i}'$ and two additional users $u_{x_i}$ and $u_{\overline{x}_i}$.
The user $u_{x_i}$ ($u_{\overline{x}_i}$, respectively) only requests the content $x_i$ ($\overline{x}_i$, respectively) with non-zero probability, i.e., $p_{u_{x_i}x_i}=1$ ($p_{u_{\overline{x}_i}\overline{x}_i}=1$, respectively).
Both the caches $c_{x_i}$ and $c_{x_i}'$ are adjacent to both the users $u_{x_i}$ and $u_{\overline{x}_i}$.
Since these are all the adjacencies of the cache $c_{x_i}'$ and the users $u_{x_i}$ and $u_{\overline{x}_i}$, their respective vertices can be placed arbitrarily close to $c_{x_i}$, thus maintaining that $G$ is planar.
Also, for each user $u_{C_j}$, $j\in [m]$, such that $C_j$ contains three literals in $\phi$, there are two additional caches $c_{u_{C_j}}$ and $c_{u_{C_j}}'$ of capacity $1$ that are adjacent only to $u_{C_j}$.
For each user $u_{C_j}$, $j\in [m]$, such that $C_j$ contains only two literals in $\phi$, there is one additional cache $c_{u_{C_j}}$ of capacity $1$ that is adjacent only to $u_{C_j}$.
Similarly, these are all the adjacencies of these additional caches, and so, their respective vertices can be placed arbitrarily close to $u_{C_j}$, thus maintaining that $G$ is planar.
\begin{figure}
	\centering  
\includegraphics[scale=0.55]{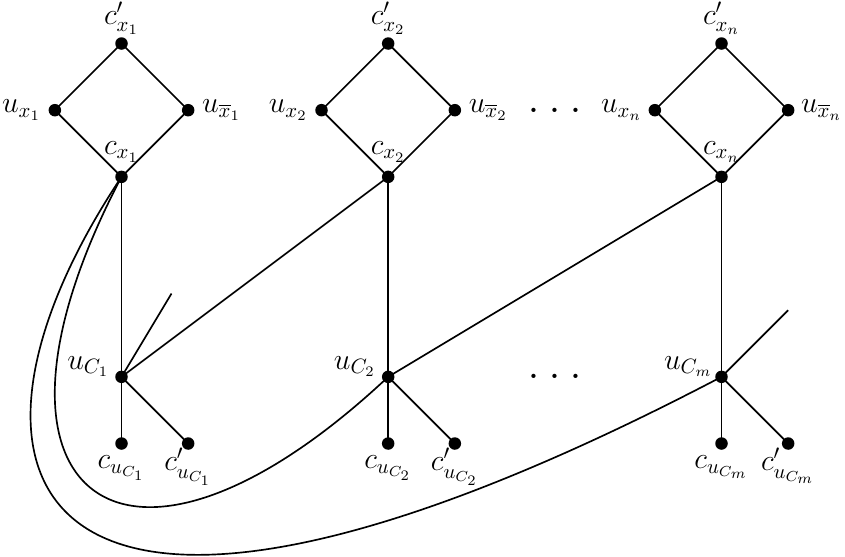}
\caption{Illustration of the planar graph $G$ constructed from an instance $\phi$ of {\sc Planar 3-SAT-E3} in the proof of Theorem~\ref{thm:sat-red}.
Here, $\phi$ contains the clause $C_1$ containing the variables $x_1$, $x_2$, and a third arbitrary one (denoted by a line protruding from the vertex $u_{C_1}$), the clause $C_2$ containing the variables $x_1$, $x_2$, and $x_n$, and the clause $C_m$ containing the variables $x_1$, $x_n$, and a third arbitrary one.}\label{fig:planar}
\end{figure}
Lastly, set $\score:=\users$.
This completes the construction of $(G,\score)$, which is clearly achieved in polynomial time.
Further, it is easy to verify that, in $(G,\score)$, $G$ is planar, $\mcap=1$, $\like=3$, and $\Delta=5$ (since each variable appears in $3$ clauses in $\phi$).
See Figure~\ref{fig:planar} for an illustration of $G$.

For the first direction, assume that there is a satisfying assignment for $\phi$.
For each $i\in [n]$, if the satisfying assignment for $\phi$ sets the variable $x_i$ to True (False, respectively), then store the content $x_i$ ($\overline{x}_i$, respectively) in the cache $c_{x_i}$ in $G$.
For each $i\in [n]$, if the content stored in the cache $c_{x_i}$ according to the above caching allocation is $x_i$ ($\overline{x}_i$, respectively), then store the content $\overline{x}_i$ ($x_i$, respectively) in the cache $c_{x_i}'$ in $G$.
Then, the current caching allocation in $G$ ensures that, for all $i\in [n]$, the user $u_{x_i}$ ($u_{\overline{x}_i}$, respectively) is adjacent to a cache that stored the content $x_i$ ($\overline{x}_i$, respectively).
Further, since this caching allocation corresponds to a satisfying assignment for $\phi$, it ensures that, for all $j\in [m]$, the user $u_{C_j}$ is adjacent to a cache that stored one of the contents corresponding to one of the literals that the clause $C_j$ contains in $\phi$.
Indeed, each clause $C_j$, $j\in [m]$, in $\phi$ corresponds to a user $u_{C_j}$ in $G$ that is adjacent to $2$ or $3$ caches that correspond to the $2$ or $3$ variables contained in $C_j$ in $\phi$.
To complete the caching allocation, for each $j\in [m]$, for the at most $2$ remaining contents (at most $1$ if $C_j$ contains only two literals in $\phi$) requested with non-zero probability by the user $u_{C_j}$ that may not already be stored in one of its adjacent caches of the form $c_{x_i}$, it suffices to store one of them in the cache $c_{u_{C_j}}$ and the other in the cache $c_{u_{C_j}}'$ (if it exists and if needed).
Thus, this caching allocation in $G$ achieves a cache hit rate of $\score$, and it is easy to check that it is feasible.

For the other direction, assume that there is a feasible caching allocation in $G$ achieving a cache hit rate of $\score$.
Since it achieves a cache hit rate of $\score$, this caching allocation ensures that, for each $i\in [n]$, the cache $c_{x_i}$ stored either the content $x_i$ or the content $\overline{x}_i$ since $p_{u_{x_i}x_i}=1$ and $p_{u_{\overline{x}_i}\overline{x}_i}=1$, and the users $u_{x_i}$ and $u_{\overline{x}_i}$ are only adjacent to one other cache of capacity $1$, namely $c_{x_i}'$.
For each $i\in [n]$, if this caching allocation in $G$ stores the content $x_i$ ($\overline{x}_i$, respectively) in the cache $c_{x_i}$, then set the variable $x_i$ to True (False, respectively).
We now argue that this results in a satisfying assignment for $\phi$.
Recall that, for each $j\in [m]$, the user $u_{C_j}$ either has $1$ (if $C_j$ contains only two literals in $\phi$) or $2$ (if $C_j$ contains three literals in $\phi$) additional caches adjacent to it that are not of the form $c_{x_i}$.
Hence, this caching allocation ensures that, for each $j\in [m]$, one of the contents requested with non-zero probability by the user $u_{C_j}$ is stored in a cache of the form $c_{x_i}$ adjacent to $u_{C_j}$.
As each of these contents corresponds to one of the literals contained in the clause $C_j$ in $\phi$, the above truth assignment for $\phi$ is a satisfying one.
\end{proof} 

Neither of the previous results rules out tractability for \prob\ when parameterized by the vertex cover number of the graph, which is the minimum size of a subset~$S$ of the vertices such that every edge is incident to at least one vertex in $S$.
In the next section dedicated to unifying the remaining open cases, we show that, for \prob, this parameterization is complexity-theoretically equivalent to several other ones we already considered. 
As a by-product, this yields \XP-tractability for \prob\ parameterized by the vertex cover number, contrasting the lower bounds ruling out the use of essentially all other structural graph parameters.

\section{Interreducibility: Linking the Open Cases}
\label{sec:interreduce}

For several of the studied cases, we obtained \XP\ algorithms, but lack the corresponding \W[1]-hardness proofs which would rule out inclusion in \FPT. 
In this section, for all parameterizations of \prob\ which we show to admit \XP\ algorithms in Section~\ref{sec:upper-bounds} (in particular $\caches$, $\users$, $\caches+\users$, $\caches+\like$, and $\users+\mcap$) and also for the vertex cover number parameter mentioned at the end of the previous section, we prove the following: either \prob\ is \W\textup{[1]}-hard for each of these, or it is \FPT\ for each of these.
That is, all of the arising parameterized problems are equivalent and there is only a single open case left for \prob. 
We emphasize that these results are not trivial, as here it does not hold that bounding one parameter is immediately equivalent to bounding the other.

\begin{restatable}{theorem}{fptequiv}
\label{thm:fpt-equivalent}
Let $\varkappa$ and $\varkappa'$ denote any two of the following six parameters: $\caches$, $\users$, $\users+\mcap$, $\caches+\users$, $\caches+\like$, and the vertex cover number $\vc(G)$.
Then, there exists a parameterized reduction from \prob\ parameterized by $\varkappa$ to \prob\ parameterized by $\varkappa'$. 
\end{restatable}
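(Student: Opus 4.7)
The plan is to take $\caches+\users$ as a pivot and show that each of the five other parameters is parameterized-equivalent to it; pairwise reductions then follow by composition. Four transformations suffice: merging users with identical neighborhoods, the cache-merging of Observation~\ref{obs:users}, splitting a user into one sub-user per requested content, and splitting a cache into unit-capacity siblings.

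For $\caches$, $\users$, and $\caches+\users$, the identity handles the easy directions. For $\caches \to \caches+\users$, I would merge each pair of users $u_1,u_2$ with $N(u_1)=N(u_2)$ into a single user of weight $w(u_1)+w(u_2)$ and probabilities $p_{us} = (w(u_1)p_{u_1 s}+w(u_2)p_{u_2 s})/(w(u_1)+w(u_2))$; a direct computation shows that $CH(Z)$ is preserved under every feasible allocation $Z$, yielding $\users \leq 2^{\caches}$. The symmetric direction $\users \to \caches+\users$ uses the cache-merging of Observation~\ref{obs:users} to obtain $\caches \leq 2^{\users}$. For $\vc(G) \leftrightarrow \caches+\users$, one direction is immediate from $\vc(G) \leq \min(\caches,\users)$; for the other, I would compute a minimum vertex cover $\caches^* \cup \users^*$ of the bipartite graph $G$ in polynomial time via K\"onig's theorem and apply the above merging reductions to the users outside $\users^*$ (whose neighborhoods lie in $\caches^*$) and to the caches outside $\caches^*$, collapsing each side to at most $\vc(G) + 2^{\vc(G)}$ vertices.

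The two remaining equivalences use splitting. For $\caches+\users \to \caches+\like$, replace each user $u$ by one sub-user $u_s$ per requested content $s$, with neighborhood $N(u)$, weight $w(u)\cdot p_{us}$, and a single deterministic request for $s$; summing the sub-user contributions $w(u)p_{us}\cdot [s\in H(u)]$ recovers $u$'s original contribution to $CH$, $\caches$ is unchanged, and $\like$ drops to $1$. Dually, for $\caches+\users \to \users+\mcap$, replace each cache $c$ by $\capa(c)$ unit-capacity copies sharing the neighborhood $N(c)$: any feasible allocation in the original lifts by placing each content stored in $c$ in a distinct copy, and conversely, since duplicating the same content across sibling copies is never beneficial, any allocation in the split instance projects back without loss. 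This leaves $\users$ unchanged and $\mcap=1$. The reverse directions $\caches+\like \to \caches+\users$ and $\users+\mcap \to \caches+\users$ follow by composing the identity (which drops $\like$ or $\mcap$) with the merging reductions above.

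The main technical delicacy is verifying that each transformation preserves $CH(Z)$ not only at the optimum but under \emph{every} feasible allocation---this is what justifies the reweighting formula in the user-merge, the capacity-summing of Observation~\ref{obs:users}, and the no-duplicates argument for split caches. Once these identities are in hand, the pivot structure yields pairwise parameterized reductions among all six parameters.
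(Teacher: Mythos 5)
Your proposal is correct and matches the paper's proof in essence: it uses the same transformations (merging users with identical neighborhoods via weight-summing and reweighted probabilities, the cache-merging of Observation~\ref{obs:users}, splitting users into one deterministic sub-user per requested content, splitting caches into unit-capacity copies, and the vertex-cover-based merging of the vertices outside the cover), merely organized around $\caches+\users$ as a pivot rather than the paper's direct case analysis closed under transitivity. The only cosmetic difference is that you compute an exact minimum vertex cover via K\"onig's theorem where the paper uses the polynomial-time 2-approximation; both suffice for a parameterized reduction.
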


\begin{proof}
First, note that we do not need to present parameterized reductions in the cases where $\varkappa\geq \varkappa'$ as the trivial reduction where the two instances are identical is valid in these cases.
In particular, this implies that we do not need to consider the case where $\varkappa=\caches+\users$ and $\varkappa'=\vc(G)$.
We now provide a case analysis that can easily be checked to be exhaustive as parameterized reductions respect transitivity.
Also, since each of the following reductions takes polynomial time and the equivalence of the two instances is immediate, we simply present the reductions.

\noindent{\bf Case 1:} $\varkappa=\users$ and $\varkappa'=\caches$.
This case was covered in the proof of Observation~\ref{obs:users}.

\noindent {\bf Case 2:} $\varkappa=\caches$ and $\varkappa'=\users$.
From an instance $(G,\score)$ of \prob\ with $\caches$ caches, we obtain an equivalent instance $(G',\score)$ of \prob\ with $\users'\leq 2^{\caches}$ users as follows.
For each set $U^* \subseteq \mathcal{\users}$ of $2$ or more users with the same neighborhood in $G$, delete all the users except for one, denote this user by $u^*$, and set the weight and content request probabilities of $u^*$ in such a way that any potential cache hits for $U^*$ in $G$ would result in the same amount of cache hits for $u^*$ in $G'$.
Specifically, $w(u^*) := \sum_{u\in U^*} w(u)$ and 
$p_{u^*s}:=\left(\sum_{u\in U^*} w(u) p_{us}\right)/w(u^*)$ for each $s\in \mathcal{\catalog}$.
Since there are $2^{\caches}$ different subsets of the caches in $G$, there are at most $2^{\caches}$ users in $G'$.

\noindent{\bf Case 3:} $\varkappa=\users$ and $\varkappa'=\users+\mcap$.
From an instance $(G,\score)$ of \prob\ with $\users$ users, we obtain an equivalent instance $(G',\score)$ of \prob\ with a maximum cache capacity of $\mcap'=1$ and $\users$ users as follows. Let $\caches$ be the number of caches in $(G,\score)$.
For each $i\in [\caches]$, replace the cache $c_i$ in $G$ by $\capa(c_i)$ caches of capacity~$1$ (recall that $\capa(c_i)\leq \catalog$) with the same neighborhood as $c_i$.

\noindent{\bf Case 4:} $\varkappa=\caches$ and $\varkappa'=\caches+\like$.
From an instance $(G,\score)$ of \prob\ with $\caches$ caches, we obtain an equivalent instance $(G',\score)$ of \prob\ with $\caches$ caches in which each user requests only a single content with probability $1$ as follows.
For each user $u$ in $G$, delete $u$, and, for each content $s$ requested with non-zero probability by $u$, add a user $u_s$ with weight $w(u)\cdot p_{us}$ such that $p_{u_ss}=1$ and $u_s$ has the same neighborhood as $u$ had.

\noindent{\bf Case 5:} $\varkappa=\vc(G)$ and $\varkappa'=\caches+\users$.
From an instance $(G,\score)$ of \prob\ with vertex cover number $\vc(G)$, we obtain an equivalent instance $(G',\score)$ of \prob\ where the number of caches and users are both upper-bounded by $\vc(G)$ as follows. Let $\caches$ and $\users$ be the number of caches and users in $(G,\score)$, respectively, and let $\mathcal{\users}$ be the set of users in $(G,\score)$. We first use the simple polynomial-time 2-approximation algorithm for {\sc Vertex Cover} to compute a vertex cover $X$ of $G$ such that $X\leq 2\cdot \vc(G)$, and let $I=V(G)\setminus X$.
By the definition of $X$, $I$ is an independent set.
For each set $U^* \subseteq \mathcal{\users}\cap I$ of $2$ or more users with the same neighborhood in $G$, delete all the users except for one, denote this user by $u^*$, and set the weight and content request probabilities of $u^*$ as in Case 2.
For each set of $2$ or more caches in $I$ with the same neighborhood in $G$, delete all the caches except for one, and give this cache a capacity equal to the sum of the capacities of those deleted caches plus its original capacity.
As there are at most $2^{|X|}$ distinct subsets of users and caches in $X$, there are at most $2^{|X|}+|X|\leq 4^{\vc(G)}+2\cdot\vc(G)$ caches and users in $G'$.
\end{proof} 

We conjecture that \prob\ is \W[1]-hard under these parameterizations, but believe that a proof requires novel techniques or insights into the problem. Moreover, as \hetprobun\ generalizes \prob, resolving this conjecture in the affirmative would also resolve the sole open case for \hetprobun. 

\section{Generalizations, Impact, and Conclusion}

The \FPT\ algorithms developed in Theorems~\ref{thm:C+S}~and~\ref{thm:xp-c} can be easily adapted to handle a more general framework associated with {\sc Network-Caching} since they consider all the relevant feasible caching allocations.
Indeed, for any objective function that can be computed in the desired \FPT~time when given a caching allocation, these algorithms can also compute the optimal value of that objective function along with its associated caching allocation.
Moreover, most objective functions in the literature satisfy the above condition.
For example, these algorithms can be trivially modified to deal with variants of {\sc Network-Caching} where weights are added to the edges of the bipartite graph which represent the caching gain obtained from retrieving a content requested by the user from a specific cache~\citep{ioannidis2016adaptive,tsigkari2022approximation}, and/or the objective function concerns other metrics such as QoS, streaming rate or energy consumption~\citep{paschos2020cache}.
Our hardness results also carry over to these variants as well as generalizations combining caching with other network-related decisions~\citep{dehghan2016complexity,krolikowski2018decomposition,ricardo2021caching}.
On the other hand, these hardness results cannot be lifted to non-discrete variants of \textsc{Network-Caching}, as these can typically be solved in polynomial time~\citep{femto}.

All of our algorithms are deterministic and implementable in CDNs or inference delivery networks, and 
in fact it is reasonable to expect some of the studied parameters to achieve small values in practice (see also Section~\ref{sec:prelims}). However, 
we believe that a natural next step would be to find the theoretically fastest algorithms under the (Strong) Exponential Time Hypothesis.
Further, designing informed heuristics based on our complexity analysis---as was successfully done in other fields~\citep{backstrom2012complexity,rost2019parametrized,komusiewicz2023group}---would be an interesting alternate direction one could take.

\section{Acknowledgements}
This work received funding from the Austrian Science Fund (FWF)~[10.55776/Y1329 and 10.55776/COE12], the WWTF Vienna Science and Technology Fund (Project 10.47379/ICT22029),
the Spanish Ministry of Economic Affairs and Digital Transformation and the European Union-NextGenerationEU through the  project 6G-RIEMANN~(TSI-063000-2021-147), the EU Horizon Europe TaRDIS project~(grant agreement 101093006), and the Smart Networks and Services Joint Undertaking (SNS JU) under the European Union's Horizon Europe and innovation programme under Grant Agreement No 101139067 (ELASTIC). Views and opinions expressed are
however those of the author(s) only and do not necessarily reflect those of the European
Union. Neither the European Union nor the granting authority can be held responsible for them.

\bibliographystyle{apalike}
\bibliography{shortened_bib}

\end{document}